
\documentclass[3p,twocolumn]{elsarticle}
\usepackage{amscd,amssymb,graphics}
\usepackage{mathrsfs} 
\usepackage{epsfig}
\usepackage{amsthm}

\newtheorem{theorem}{Theorem}[section]
\newtheorem{corollary}[theorem]{Corollary} 

\newtheorem{proposition}[theorem]{Proposition}

\newtheorem{conjecture}[theorem]{Conjecture}

\newcommand{\abs}[1]{\vert#1\vert}
\def\norm#1{\left\Vert#1\right\Vert}

\def\I {{\mathbb I}}

\def\E {{\mathbb E}}
\def\N{{\mathbb N}}

\def\R{{\mathbb R}}
\def\s{{\mathbb S}}
\def\e{\varepsilon}
\def\ve{\varepsilon}
\def\VC{{\mathrm{VC}}}  
\def\Ur{{\mathbb U}}
\def\var{{\mathrm{var}}\,}

\def\diam{{\mathrm{diam}}\,}

\journal{Journal of Discrete Algorithms}

\begin{document}

\begin{frontmatter}



\title{Indexability, concentration, and VC theory}


\author{Vladimir Pestov}

\address{Departamento de Matem\'atica,
Universidade Federal de Santa Catarina, Campus Universit\'ario Trindade,
CEP 88.040-900 Florian\'opolis-SC, Brasil \footnote{}\fnref{brasil}}
\fntext[brasil]{CNPq visiting researcher} 
\address{Department of Mathematics and Statistics, University of Ottawa,
        585 King Edward Avenue, Ottawa, Ontario K1N6N5 Canada  \footnote{}\fnref{canada}}
\fntext[canada]{Permanent address}

\begin{abstract}
Degrading performance of indexing schemes for exact similarity search in high dimensions has long since been linked to histograms of distributions of distances and other $1$-Lipschitz functions getting concentrated. We discuss this observation in the framework of the phenomenon of concentration of measure on the structures of high dimension and the Vapnik-Chervonenkis theory of statistical learning. 
\end{abstract}

\begin{keyword}
Exact similarity search \sep 
indexing schemes \sep
curse of dimensionality \sep
Lipschitz functions \sep
concentration of measure\sep
uniform Glivenko--Cantelli theorem \sep
pivot tables \sep
metric trees 
\MSC[2010] 68P10 \sep 68P20 \sep 68Q87
\end{keyword}

\end{frontmatter}


\section{Introduction}
At an intuitive level, at least for a limited class of indexing schemes the geometric and probabilistic origin of the
curse of dimensionality is quite transparent.
Let $W=(\Omega,\rho,X)$ denote a similarity workload, where $\rho$ is a metric on a domain $\Omega$ and $X$ is a finite subset of $\Omega$ (dataset). Let us say we are interested in indexing  into $W$ for deterministic, exact range queries. A traditional ``distance-based'' indexing scheme, stripped down to the bone, consists of a family of real-valued functions $f_i$, $i\in I$ on $\Omega$, either fully or partially defined, which satisfy the $1$-Lipschitz property:
\begin{equation}
\abs{f_i(x)-f_i(y)}\leq \rho(x,y).\end{equation}
(For example, a pivot-based indexing scheme will be using distance functions $\rho(p_i,-)$ to the pivots $p_i\in\Omega$.)
Given a range query $(q,\e)$, where $q\in\Omega$ and $\e>0$, the algorithm chooses recursively a sequence of indices $i_n$, where each $i_{n+1}$ is determined by the values $f_{i_k}(q)$, $k\leq n$.
The functions $f_i$ serve to discard those datapoints which cannot possibly answer the query. Namely, if $\abs{f_i(q)-f_i(x)}\geq\e$, then, by the $1$-Lipschitz property of $f_i$, one has $\rho(q,x)\geq \e$, and so the point $x$ is irrelevant and need not be considered (Figure \ref{fig:discarding}).

\begin{figure}[ht]
\begin{center}
\scalebox{0.25}[0.25]{\includegraphics{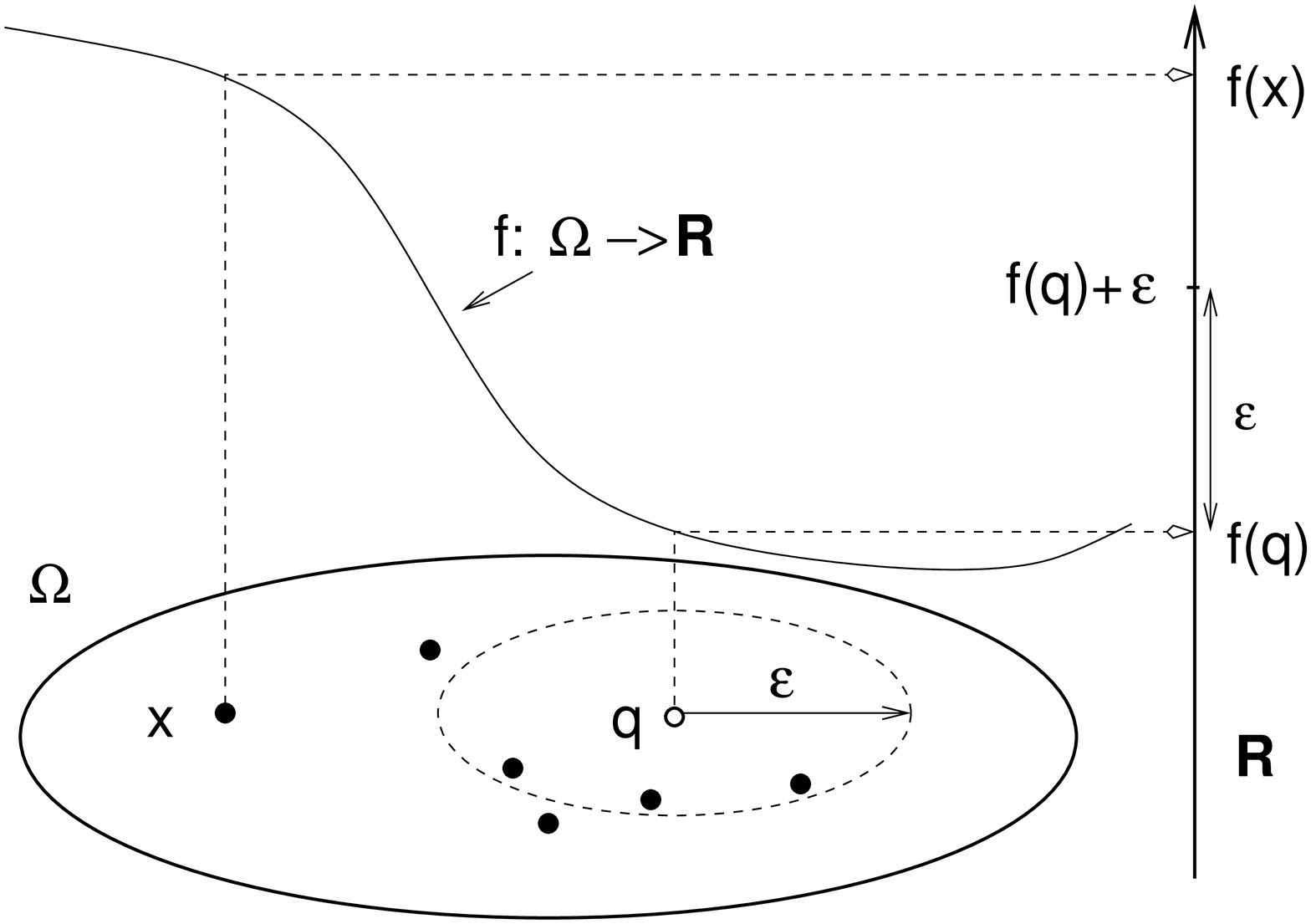}} 
\caption{The datapoint $x$ can be discarded.
\label{fig:discarding}}
\end{center}
\end{figure}  

After the calculation terminates,
the algorithm returns all points which cannot be discarded, and checks each one of them against the condition $\rho(x,q)<\e$.

Next come two standard observations about high-dimensional data. The first one, known as the ``empty space paradox,'' asserts that the average distance $\E(\e_{NN})$ to the nearest neighbour approaches the average distance $\E(\rho)$ between two datapoints as the dimension $d$ goes to infinity, provided the number of datapoints, $n$, grows subexponentially in $d$.
Cf. Figure \ref{fig:nn_dist}, where we illustrate the point with a constant number of points ($n=10^3$ and $n=10^5$), and the distances are normalized so that the {\em characteristic size} of the gaussian space $(\R^n,\gamma^n)$,
\begin{equation}
{\mathrm{CharSize}}\,(X) =\E_{\mu\otimes\mu}(\rho(x,y)),\end{equation}
is one.

\begin{figure}[ht]
\centering
\epsfig{file=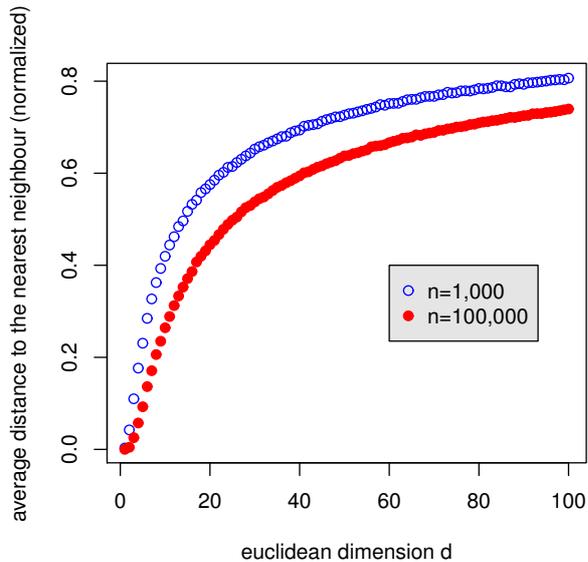, height=3.3in, width=3.3in} 
\caption{\label{fig:nn_dist}
The normalized average distance to the nearest neighbour in a dataset of $n$ points randomly drawn from a gaussian distribution in $\R^d$.}
\end{figure}

The second observation is that the histograms of values of common $1$-Lipschitz functions on high-dimensional data are concentrated near their mean (or median) values. This effect is already pronounced in moderate dimensions such as $d=14$ in Figure \ref{fig:hist_14gauss}. Here the function is a distance to a randomly chosen pivot $p$, and assuming the query point $q$ is at a distance $\approx 1$ from $p$, only the points outside of the region marked by vertical bars can be discarded.



\begin{figure}[ht]
\centering
\epsfig{file=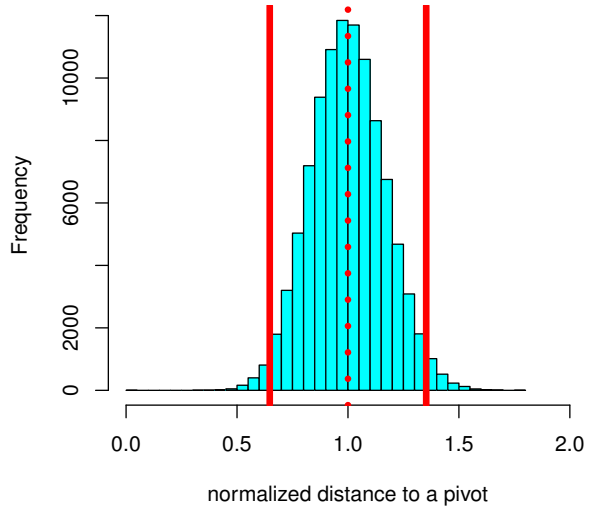, height=3.3in, width=3.3in} 
\caption{\label{fig:hist_14gauss}
Histogram of distances to a randomly chosen pivot in a dataset $X$ of $n=10^5$ points drawn from a gaussian distribution in $\R^{14}$. The vertical lines mark the mean normalized distance $1\pm \e_{NN}$.}
\end{figure}

The two properties combined imply that as $d\to\infty$, fewer and fewer datapoints can be discarded for an average range query, and the performance of an indexing scheme degrades rapidly. This mechanism has been discussed repeatedly, e.g. \cite{chavez:01}, pp. 35--37, \cite{navarro:2002}, \cite{samet}, p. 487, to mention just a few sources.

To make this idea yield rigorous lower performance bounds, one needs to guarantee first that {\em every} histogram of distances of $1$-Lipschitz functions used to build an indexing scheme for a given domain $\Omega$ is highly concentrated. In other words, if $\mathscr F$ denotes a class of $1$-Lipschitz functions from which we can choose the $f_i$, then we want a low uniform upper bound on the variances of $f\in {\mathscr F}$. Results of this type are indeed well-known for a variety of geometric objects and are referred jointly as the {\em phenomenon of concentration of measure} \cite{GrM:83,MS,ledoux}. 

Next problem is, how to link the concentration of functions $f$ with regard to the presumed {\em underlying distribution} $\mu$ on the domain $\Omega$ to concentration with regard to the {\em empirical measure} $\mu_n$ supported on the dataset $X$ (this was essentially a criticism of \cite{pestov:00} made in \cite{shaft:06})? Here one needs the  machinery of {\em statistical learning theory} of Vapnik and Chervonenkis \cite{vapnik:98,anthbart:99,devroye:96,vidyasagar:2003}, which can guarantee such results provided the class $\mathscr F$ has low combinatorial complexity (e.g., a finite VC dimension). 
This way, one obtains $\Omega(n/d\lg n)$ lower bounds for the pivot table expected average performance \cite{VolPest09}, as well as superpolynomial in $d$ lower bounds for metric trees \cite{pestov:08b}.

{\em Approximate NN queries} \cite{IM,PC:08} seem to be in some sense free from the curse of dimensionality. In fact, the concentration of measure becomes a positive force here, and we will try to explain why, using the example of random projections in the Hamming cube (the approach of Kushilevitz, Ostrovsky and Rabani \cite{KOR:00}), as well as the Euclidean space (the Johnson--Lindenstrauss lemma \cite{JL}).

Getting back to exact search, the {\em Curse of Dimensionality Conjecture} \cite{indyk:04} calls for a general statement about lower bounds, which would apply across the entire range of all possible indexing schemes.
The conjecture is still open even for the Hamming cube $\{0,1\}^n$, and we discuss it briefly.

We conclude the article with a few remarks on the notion of intrinsic dimensionality of data, on a black-box search model of Krauthgamer and Lee \cite{KL:05}, as well as on a spatial approximation algorithm based on Delaunay graphs \cite{navarro:2002}.

\section{Concentration}

\subsection{The concentration of measure phenomenon}
Informally, the phenomenon can be stated as follows:

\begin{quote}
On a typical ``high-dimensional'' structure $\Omega$, every $1$-Lipschitz function $f\colon\Omega\to [0,1]$ has small variation.
\end{quote}

Usually, however, concentration is being dealt with using a different dispersion parameter. We proceed to precise definitions.

Let a metric space $(\Omega,\rho)$ carry a probability measure $\mu$. Such an object is called a {\em metric space with measure}.
One defines the {\em concentration function} $\alpha_{\Omega}$ of $\Omega$ by setting $\alpha_{\Omega}(0)=1/2$ and, for $\e>0$,
\[\alpha_{\Omega}(\ve)=
1-\inf_{A\subseteq\Omega}\left\{\mu\left(A_\e\right) \colon
\mu(A)\geq\frac 12\right\}.
\]

The value $\alpha_{\Omega}(\ve)$ gives a uniform upper bound on the measure of the complement to the $\ve$-neighbourhood $A_{\ve}$ of every subset $A$ of measure $\geq 1/2$, cf. Fig. \ref{fig:illconc1}. 
\begin{figure}[ht]
\centering
\scalebox{0.3}[0.3]{
\includegraphics{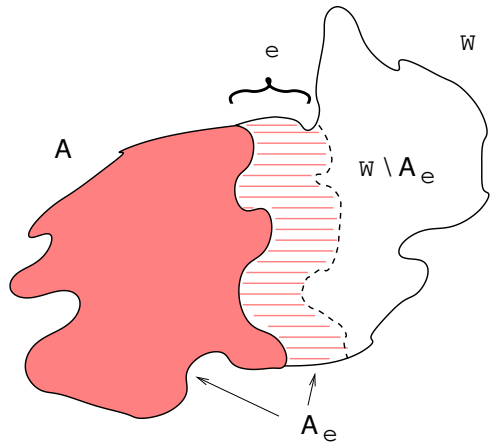}}
\caption{To the concept of the concentration function $\alpha_{\Omega}(\e)$.}
\label{fig:illconc1}
\end{figure}

On a typical high-dimensional geometric object $\Omega$ the function $\alpha(\e)$ drops off steeply near zero. For regular geometric objects such as Hamming cubes, Euclidean unit spheres and so on, one can usually derive gaussian upper bounds of the form 
\[\alpha(\e)\leq \exp(-\Theta(\e^2 d)),\]
where $d$ is the dimension parameter. 

For example, the Hamming cube $\{0,1\}^d$ with the normalized Hamming metric and uniform measure satisfies a Chernoff bound $\alpha(\e)\leq \exp(-2\e^2d)$ (obtained by combining Harper's isoperimetric inequality, see e.g. \cite{FF:81}, with the classical Chernoff bound, cf. \cite{vidyasagar:2003}, 2.2.1). See Figure \ref{fig:hamming100}.

\begin{figure}[ht]
\centering
\epsfig{file=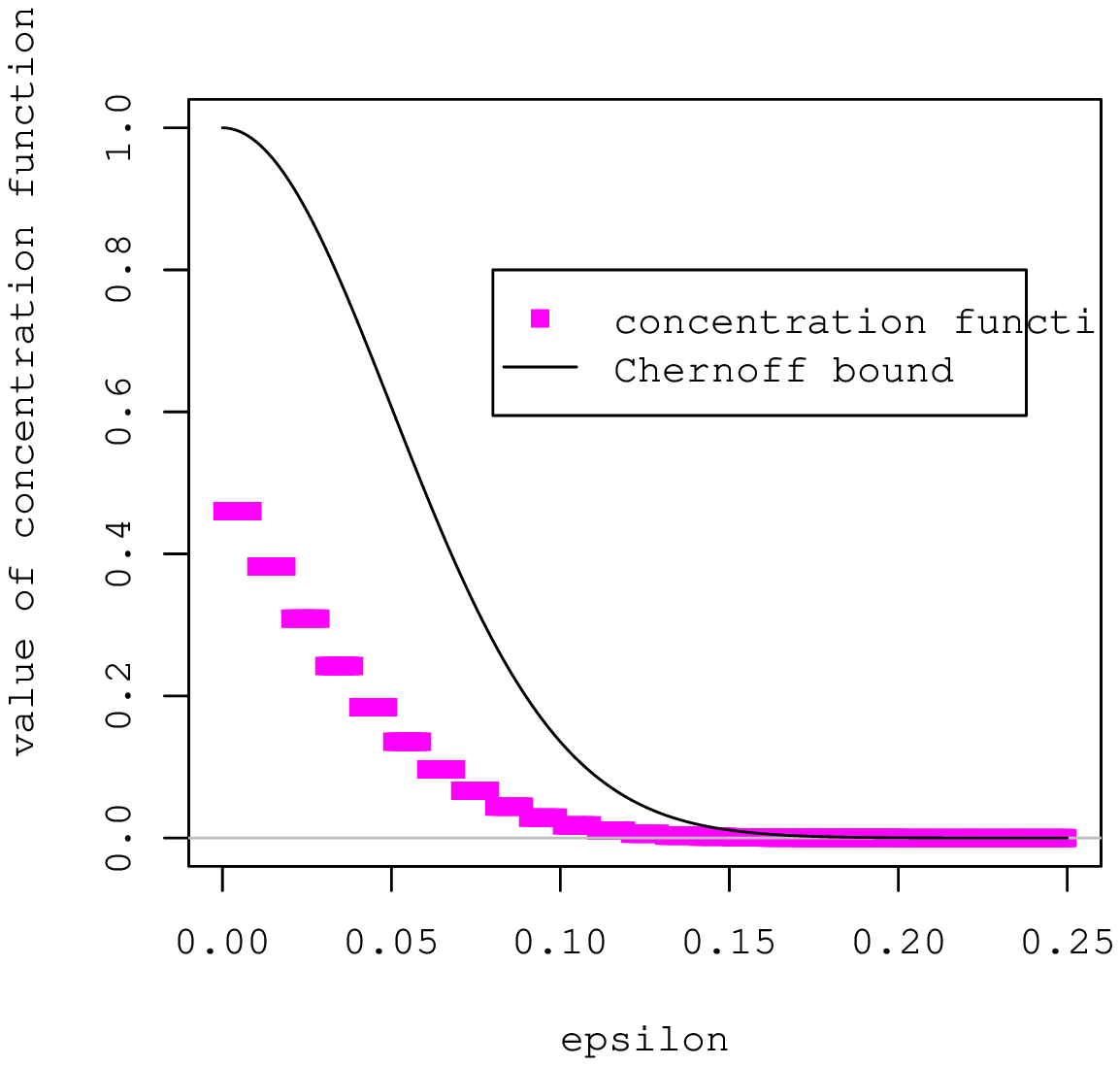, height=3.3in, width=3.3in} 
\caption{Concentration function of the Hamming cube of dimension $d=100$ {\em vs} Chernoff bound.}
\label{fig:hamming100}
\end{figure} 

It follows easily that for every real-valued $1$-Lipschitz function $f$ on $\Omega$ and for each $\ve>0$ one has
\begin{equation}
\mu\{x\in\Omega\colon \left\vert f(x) - M_f\right\vert >\ve \}\leq 2\alpha_{\Omega}(\ve),
\label{eq:mf}
\end{equation}
where $M_f$ is the median value of $f$, that is, a (generally non-unique) real number with the property that for a randomly drawn $x\in\Omega$ the probabilities of the events $[f(x)\geq M]$ and $[f(x)\leq M]$ are at least $1/2$ each. One can further derive uniform upper bounds in terms of $\alpha_{\Omega}$ on the variances of $1$-Lipschitz functions on $\Omega$ with values in a bounded interval.

The concentration phenomenon admits the following illustration. Draw $1,000$ points randomly from a high-dimensional geometric object such as the unit cube
$\I^d$ centred at the origin, choose a random orthogonal projection onto a two-dimensional subspace, and project both the cube and the chosen points on this subspace. The points will concentrate near the centre, the more so the higher the dimension $d$ is, as seen in Figure \ref{fig:cpts300} for the values of dimension $d=3,10,100$ and $1000$. The red outline is the two-dimensional projection of the cube.

\begin{figure}[ht]
\centering
\epsfig{file=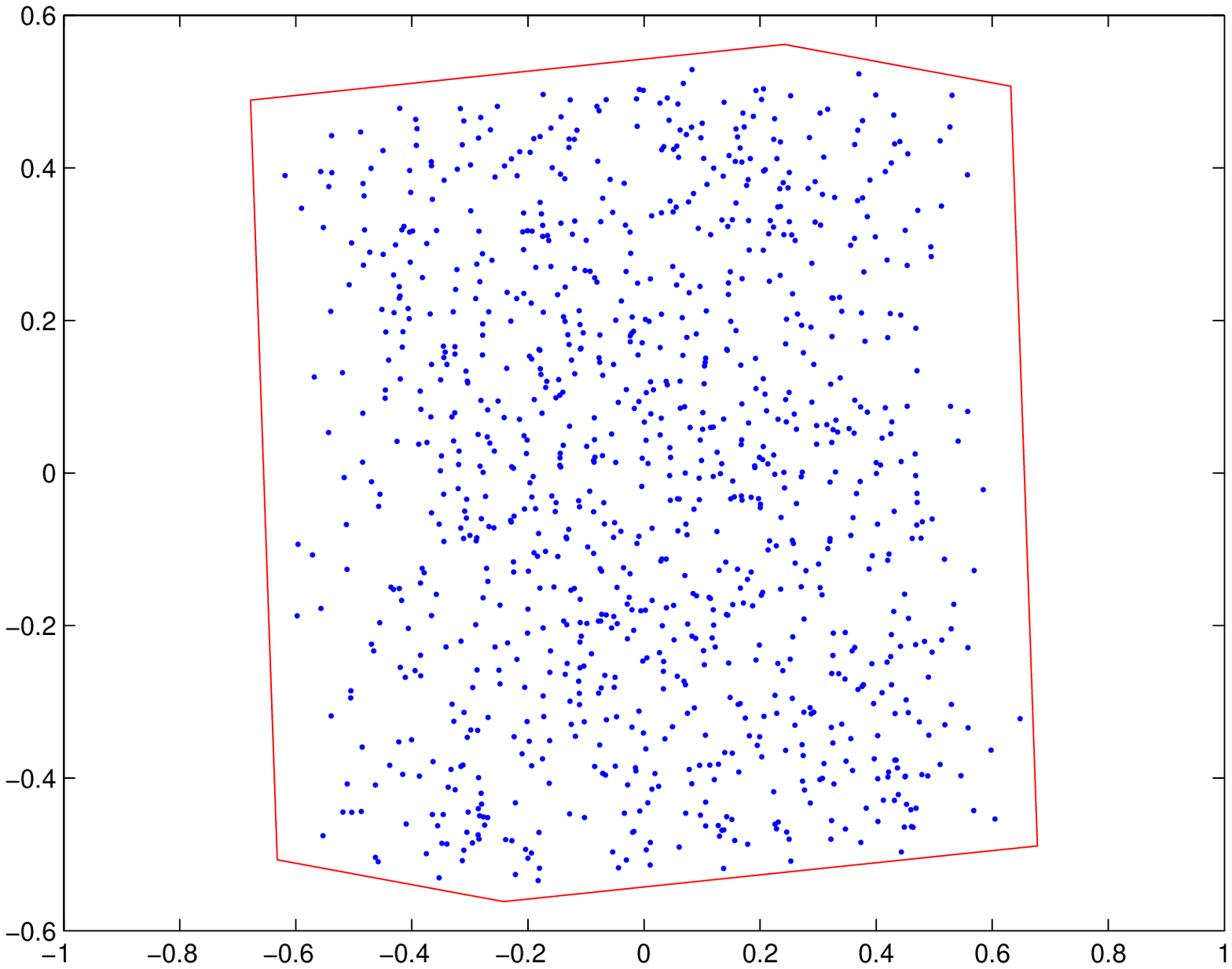, height=1.5in, width=1.5in} 
\epsfig{file=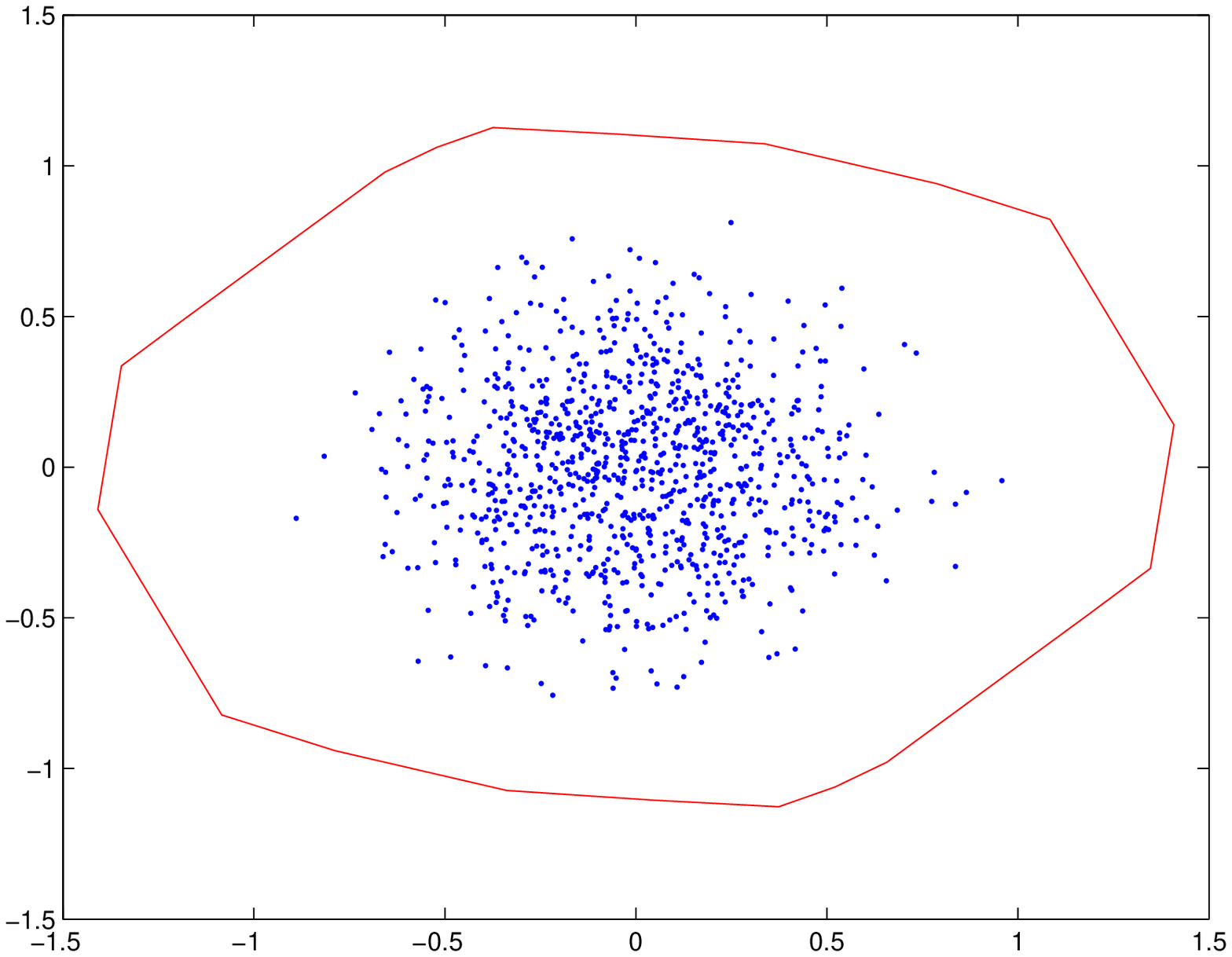, height=1.5in, width=1.5in} 
\epsfig{file=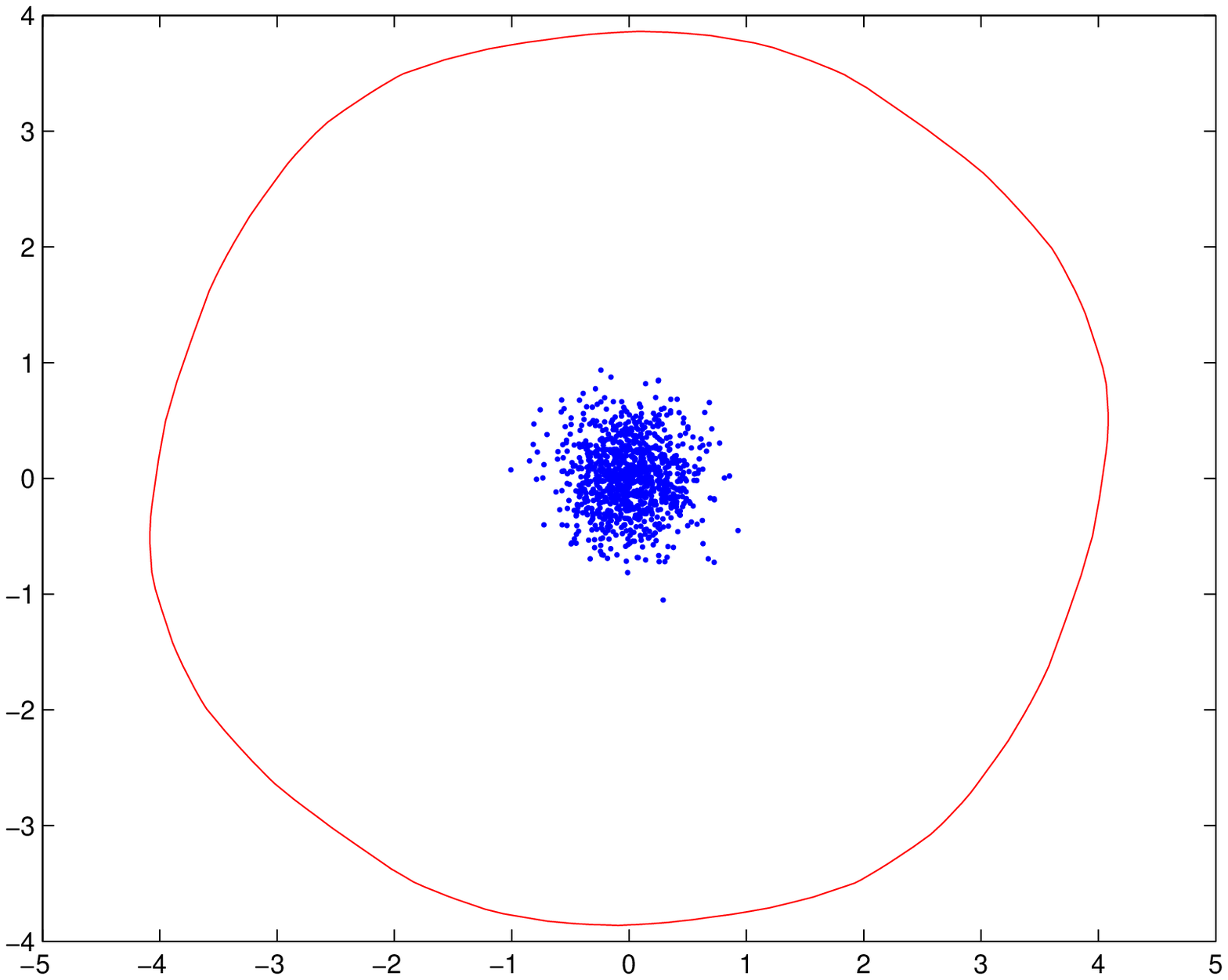, height=1.5in, width=1.5in}
\epsfig{file=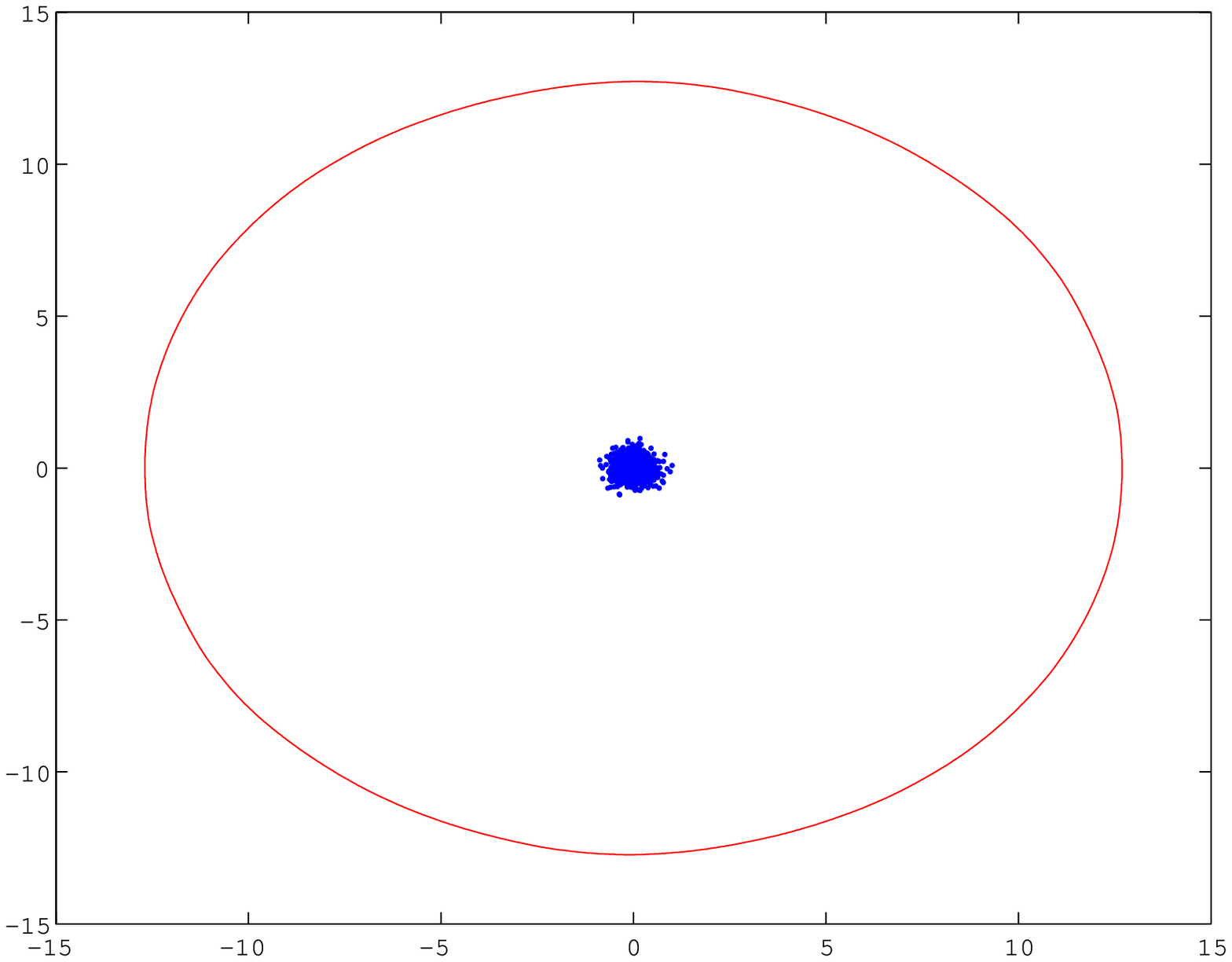, height=1.5in, width=1.5in} 
\caption{Orthogonal projection of a unit Euclidean $d$-cube and of $1,000$ random points inside the cube on a random $2$-subspace, $d=3$ (top left), $d=10$ (top right), $d=100$ (bottom left), $d=1000$ (bottom right).}
\label{fig:cpts300}
\end{figure}

Another noteworthy concequence of concentration is that the shape of the random projection of the cube is getting ever more similar to a disk as $d\to\infty$. In fact, for $d=1000$ the only visual difference one can spot between a random projection of a unit cube and that of a unit sphere, is the scale of the two projections: the diameter of a unit $d$-cube is $O(\sqrt{d})$.

This illustrates an interesting feature of geometry of high dimensions: many  high-dimensional objects look essentially the same to a low-dimensional observer. For instance, a certain precise version of this statement holds true for {\em all} convex bodies, as recently proved by Klartag \cite{klartag:07}. For this reason, for an asymptotic study of performance of indexing schemes when $d\to\infty$ the choice of a particular family of domains (Euclidean spheres, balls, cubes, Hamming cubes) does not matter that much.

Among the books treating the concentration phenomenon, \cite{MS} is the most reader-friendly, \cite{ledoux} most comprehensive, and \cite{gromov:99} contains a wealth of ideas. See also a survey \cite{M00}.

\subsection{Asymptotic assumptions on the similarity workload\label{ss:assumptions}}
Let us agree on the following four assumptions on the similarity workload:

\subsubsection{Domain as a metric space with measure} 
The metric domain $(\Omega,\rho)$ is equipped with a probability measure $\mu$, and datapoints are drawn from $\Omega$ in an i.i.d. fashion following the distribution $\mu$.

{\small
(This is the model used in \cite{CPZ:98}, which of course agrees with the traditional statistical approach to data modelling.)}

\subsubsection{Normalization of the distance} 
The distance $\rho$ on the domain is normalized so that the characteristic size of $\Omega$ is constant:
\[{\mathrm{CharSize}}(\Omega) = \E_{\mu\otimes\mu}(\rho)=O(1).\]

{\small
(Every domain can be renormalized in the above fashion unless the expected distance between the two points is infinite, which does not appear to be a realistic assumption anyway.)}

\subsubsection{Growing instrinsic dimension} $\Omega$
has ``intrinsic dimension $d$'' in the sense that the concentration function of the metric space with measure $(\Omega,\rho,\mu)$ admits a gaussian upper bound
\[\alpha_{\Omega}(\ve) = \exp(-\Omega(\ve^2d)).\]

{\small (Such an approach to intrinsic dimensionality is developed in \cite{pestov:07,pestov:08}.)}

\subsubsection{Size of a dataset} The number $n$ of datapoints grows faster than any polynomial function in $d$, but slower than any exponential function in $d$:
\begin{eqnarray}
n=d^{\omega(1)},~~
d=\omega(\log n).  
\end{eqnarray}
{\small
(This is a standard assumption in the asymptotic analysis of indexing schemes for similarity search, cf. \cite{indyk:04}. An example of such a rate of growth is $n=2^{\sqrt d}$.)}

Note that randomly drawing a single dataset $X\subseteq\Omega$ with $n$ points amounts to randomly drawing a single point in the $n$-th power of the domain, $\Omega^n$, equipped with the product probability measure $\mu^{\otimes n}$.
In order to perform asymptotic analysis of indexing scheme performance, we will in fact be choosing an {\em infinite sequence} of datasets $X_d\subseteq \Omega_d$, $d=1,2,\ldots$. This is equivalent to drawing a single point $\bar x$ ({\em sample path}) in the infinite product
\[\Omega_1^{n_1}\times\Omega_2^{d_2}\times\ldots\times\Omega_d^{n_d}\times\ldots,\]
with regard to the corresponding infinite product of probability measures:
\[\bar X\sim \mu_1^{\otimes n_1}\otimes\mu_2^{\otimes d_2}\otimes\ldots\otimes\mu_d^{\otimes n_d}\otimes\ldots.\]
When talking about {\em confidence,} we will mean the product probability in the above infinite product space. Specifically, a statement $Q(d,\bar x)$ parametrized by the dimension $d$ and taking as a variable the sample path $\bar x$ {\em occurs with} ({\em asymptotically}) {\em high confidence} if for every $\delta>0$ there is $D$ so that 
\[P[Q(d,\bar X)\mbox{ is true }]>1-\delta\]
whenever $d\geq D$. 

At the same time, in order to keep the notation simple, we will suppress the dimension index $d$ and talk just of a single domain $\Omega$ and a dataset $X\subseteq\Omega$.

\subsection{Empty space paradox\label{ss:empty}} 
Denote $\ve_{NN}$ the nearest neighbour distance function on $\Omega$, given by $\ve_{NN}(\omega) = \rho(\omega,X)$.

\begin{theorem} 
\label{th:empty}
Under our standing assumptions on the workload, for every $\e>0$ one has with asymptotically high confidence that for all points $\omega\in\Omega$ except for a set of measure $\exp(-\Omega(\e^2 d))$
\[\left\vert\ve_{NN}(\omega)-{\mathrm{CharSize}}(\Omega)\right\vert<\e.\]
\end{theorem}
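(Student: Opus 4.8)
The plan is to prove the two-sided estimate by establishing $\e_{NN}(\omega)<c+\e$ and $\e_{NN}(\omega)>c-\e$ separately, where $c={\mathrm{CharSize}}(\Omega)$. I would fix a base point $\omega$ first, treat the dataset $X$ as the random object, obtain a tail bound in the confidence variable $X$, and only then pass to a statement valid for $\mu$-almost every $\omega$ by integrating in $\omega$ and applying Markov's inequality. The geometric inputs are all instances of the concentration inequality (\ref{eq:mf}): for each fixed $y$ the distance $\rho(y,-)$ is $1$-Lipschitz, so it clusters around its median $m(y)$ inside a window of width $O(1/\sqrt d)$, and the averaged function $g(\omega)=\E_y\rho(\omega,y)$ is again $1$-Lipschitz with $\E_\omega g = c$.

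First I would control the set $A=\{\omega:\abs{m(\omega)-c}\ge\e/2\}$ of atypical base points. Applying (\ref{eq:mf}) to the $1$-Lipschitz function $g$ gives $\mu\{\abs{g-M_g}>\e/4\}\le 2\alpha_\Omega(\e/4)$, while both the mean--median gap $\abs{\E_\omega g-M_g}$ and the pointwise gap $\abs{g(\omega)-m(\omega)}$ are bounded by $\int_0^\infty 2\alpha_\Omega(t)\,dt=O(1/\sqrt d)$. For $d$ large these gaps fall below $\e/8$, so $A$ is contained in $\{\abs{g-M_g}\ge\e/4\}$ and hence $\mu(A)\le\exp(-\Omega(\e^2 d))$.

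Now fix $\omega\notin A$, so $c-\e/2<m(\omega)<c+\e/2$, and draw the $n$ datapoints i.i.d. The upper estimate is easy: by independence $P_X[\e_{NN}(\omega)>c+\e]=\left(P_y[\rho(\omega,y)>c+\e]\right)^n\le(1/2)^n$, since $c+\e>m(\omega)$ forces the single-draw probability $P_y[\rho(\omega,y)>c+\e]\le P_y[\rho(\omega,y)>m(\omega)]\le 1/2$. The lower estimate carries the real content of the empty-space paradox: the event $\e_{NN}(\omega)\le c-\e$ forces some $x_i$ to land within $c-\e<m(\omega)-\e/2$ of $\omega$, so a union bound together with (\ref{eq:mf}) gives $P_X[\e_{NN}(\omega)\le c-\e]\le n\,P_y[\rho(\omega,y)\le m(\omega)-\e/2]\le 2n\,\alpha_\Omega(\e/2)$. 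Because $\log n=o(d)$ by the size assumption $d=\omega(\log n)$, the factor $n$ is swallowed by the Gaussian tail: $2n\,\alpha_\Omega(\e/2)=\exp(\log n-\Omega(\e^2 d))=\exp(-\Omega(\e^2 d))$ once $d$ is large. Hence $P_X[\abs{\e_{NN}(\omega)-c}\ge\e]\le\exp(-\Omega(\e^2 d))$ for every $\omega\notin A$.

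It remains to combine the two levels of randomness. Setting $B_X=\{\omega:\abs{\e_{NN}(\omega)-c}\ge\e\}$ and integrating in $\omega$ first, Fubini's theorem yields $\E_X\mu(B_X)=\E_\omega P_X[\omega\in B_X]\le\mu(A)+\sup_{\omega\notin A}P_X[\omega\in B_X]\le\exp(-\Omega(\e^2 d))$. A single application of Markov's inequality then turns this bound on the expected bad measure into the required high-confidence statement: with confidence $1-\exp(-\Omega(\e^2 d))$, which tends to $1$ as $d\to\infty$, the bad set satisfies $\mu(B_X)\le\exp(-\Omega(\e^2 d))$, which is exactly the assertion. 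I expect the only delicate point to be the parameter budget in the lower estimate, namely arranging that the union bound over the $n$ datapoints is absorbed by the concentration tail; this is precisely where the subexponential growth hypothesis $d=\omega(\log n)$ is indispensable, and it is the quantitative heart of the empty-space phenomenon.
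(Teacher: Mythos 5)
Your proposal is correct, and it reaches the theorem by a genuinely different route than the paper does. The geometric inputs are the same --- gaussian concentration of the $1$-Lipschitz distance functions $\rho(\omega,-)$, the $O(1/\sqrt d)$ mean--median gap (which you state correctly; the paper's ``$O(\sqrt d)$'' is a typo), and the absorption of a union bound over $n$ datapoints by the hypothesis $\log n=o(d)$ --- but the architectures differ. The paper exploits the fact that $\ve_{NN}(\omega)=\rho(\omega,X)$ is itself a $1$-Lipschitz function of $\omega$, so the whole theorem reduces to locating its median $\ve_M$: the inequality $\liminf_d\ve_M\geq 1$ is proved by contradiction via the Gromov--Milman observation (if $\mu(A)>\alpha_\Omega(\gamma)$ then $\mu(A_\gamma)>1/2$), which forces $\mu(X_{\ve_M})\leq n\exp(-\Omega(\e^2d))<1/2$ if $\ve_M$ lags behind $R_M$, contradicting medianness; the estimate $\ve_M\leq\min_{x\in X}R(x)$ gives the converse, and a single application of concentration of $\ve_{NN}$ about $\ve_M$ then bounds the exceptional set of $\omega$'s all at once. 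You never use the Lipschitzness of $\ve_{NN}$ nor the Gromov--Milman lemma: instead you fix a typical $\omega$ (controlled through the $1$-Lipschitz mean function $g(\omega)=\E_y\rho(\omega,y)$ with $\E_\omega g={\mathrm{CharSize}}(\Omega)$), bound the two tails $P_X[\ve_{NN}(\omega)>c+\e]\leq(1/2)^n$ and $P_X[\ve_{NN}(\omega)\leq c-\e]\leq 2n\,\alpha_\Omega(\e/2)$ directly in the dataset variable, and swap the two levels of randomness by Fubini plus Markov. What each buys: your argument is fully quantitative, avoids the paper's subsequence/contradiction format, and makes the role of $d=\omega(\log n)$ completely explicit in one line; the paper's argument, once the median is pinned down, handles all query points with one concentration inequality and yields Proposition \ref{p:simplex} (all pairwise distances concentrate near ${\mathrm{CharSize}}(\Omega)$) as a byproduct of the same computation, which your decomposition does not directly produce. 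Two points you should make explicit in a final write-up: the Markov step silently halves the exponent, which is harmless only because the $\Omega(\cdot)$ constant absorbs it, and the joint measurability of $(X,\omega)\mapsto \mathbf{1}[\,\abs{\ve_{NN}(\omega)-c}\geq\e\,]$ needed for Fubini deserves at least a remark.
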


The result applies to the Hamming cube, the Euclidean cube, the Euclidean space with gaussian measure, the Euclidean ball, etc.

As a byproduct of the technique, one obtains:

\begin{proposition}
\label{p:simplex}
Under the same assumptions, for every $\e>0$ the pairwise distances between datapoints of $X$ are all in the range ${\mathrm{CharSize}}\,(\Omega)\pm \ve$ with asymptotically high confidence.
\end{proposition}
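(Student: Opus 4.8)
The plan is to reuse the concentration machinery behind the empty space paradox (Theorem~\ref{th:empty}): a two-step application of the deviation inequality (\ref{eq:mf}), followed by a union bound over all pairs that survives precisely because the dataset grows subexponentially in $d$. Throughout, the only inputs are that distance functions are $1$-Lipschitz, the gaussian bound $\alpha_{\Omega}(\e)=\exp(-\Omega(\e^2 d))$, and the size assumption $d=\omega(\log n)$.

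First I would fix a pivot $p\in\Omega$ and use that $f_p=\rho(p,-)$ is $1$-Lipschitz by the triangle inequality. Writing $m(p)=\E_{\mu}(\rho(p,x))$ for its mean, the inequality (\ref{eq:mf}) together with the gaussian decay of $\alpha_{\Omega}$ gives
\[
\mu\{x\colon \abs{\rho(p,x)-m(p)}>\e/3\}\leq \exp(-\Omega(\e^2 d)),
\]
once one checks that the median $M_{f_p}$ and the mean $m(p)$ differ by at most $\int_0^\infty 2\alpha_{\Omega}(t)\,dt=O(1/\sqrt d)$, so that for $d$ large they lie within $\e/6$ of one another and the median in (\ref{eq:mf}) may be replaced by the mean.

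The second step observes that the mean-distance function $p\mapsto m(p)$ is itself $1$-Lipschitz, since $\abs{m(p)-m(p')}\leq \E_{\mu}\abs{\rho(p,x)-\rho(p',x)}\leq\rho(p,p')$, and that its own mean equals the characteristic size, $\E_{\mu}(m(p))=\E_{\mu\otimes\mu}(\rho)={\mathrm{CharSize}}(\Omega)$. Applying (\ref{eq:mf}) to $m$, again trading its median for its mean at the cost of an $O(1/\sqrt d)$ term, yields
\[
\mu\{p\colon \abs{m(p)-{\mathrm{CharSize}}(\Omega)}>\e/3\}\leq \exp(-\Omega(\e^2 d)).
\]
Chaining the two estimates over an ordered pair $(x,y)\sim\mu\otimes\mu$ (with $x$ in the role of the pivot) shows that $\abs{\rho(x,y)-{\mathrm{CharSize}}(\Omega)}<\e$ fails with probability at most $2\exp(-\Omega(\e^2 d))$, the mean--median corrections being lower order and absorbed into the exponent for $d$ large.

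Finally I would take a union bound over the ${n\choose 2}$ unordered pairs of datapoints: the probability that some pairwise distance escapes the band ${\mathrm{CharSize}}(\Omega)\pm\e$ is at most $n^2\exp(-\Omega(\e^2 d))$. The hard part, and the only place where the workload assumptions enter essentially, is ensuring this tends to $0$. Taking logarithms gives $2\log n-\Omega(\e^2 d)$, which diverges to $-\infty$ because the size assumption $d=\omega(\log n)$ forces $\log n=o(d)$ while the exponent is linear in $d$. Hence the event holds with asymptotically high confidence. The two anticipated technical nuisances are the mean-versus-median bookkeeping in the two concentration steps and confirming that the chained per-pair failure probability retains the form $\exp(-\Omega(\e^2 d))$; both are routine given the gaussian decay of $\alpha_{\Omega}$.
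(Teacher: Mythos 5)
Your proof is correct and follows essentially the same route as the paper's: a two-level concentration argument (the distance function $\rho(p,-)$ concentrates around a pivot-dependent central value, which is itself a $1$-Lipschitz function of $p$ and hence concentrates around a constant identified with ${\mathrm{CharSize}}(\Omega)$), followed by a union bound over all $O(n^2)$ pairs that wins because $n$ is subexponential in $d$. The only difference is cosmetic: the paper centres at the median-distance function $R(\omega)$ and invokes the uniform $O(1/\sqrt d)$ mean--median comparison once at the end, whereas you centre at the mean $m(p)$ and do that same bookkeeping inside each application of (\ref{eq:mf}).
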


For constant $n=\abs X$ and the case of a Euclidean domain the result was established in \cite{HMN}.


Our proofs can be found in Appendix A.

\section{VC theory}
\subsection{VC dimension\label{ss:vc}}
Let $\mathscr C$ denote a collection of subsets of the domain $\Omega$. The VC dimension is an important measure of combinatorial complexity of $\mathscr C$.
A finite set $A\subseteq\Omega$ is {\em shattered} by $\mathscr C$ if every subset $B\subseteq A$ can be ``carved out'' of $A$ with the help of a suitable element $C$ of $\mathscr C$:
\[B = A\cap C.\]
\begin{figure}[ht]
\begin{center}
\scalebox{0.25}[0.25]{\includegraphics{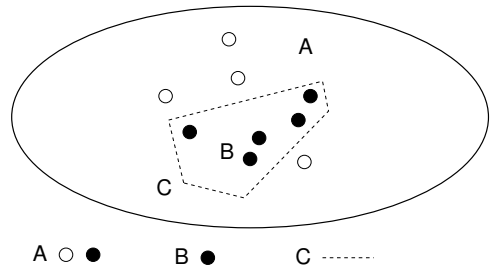}} 
\caption{To the concept of a set $A$ shattered by a class $\mathscr C$.}
\end{center}
\end{figure}  

The {\em VC dimension} of $\mathscr C$, denoted $\VC({\mathscr C})$, is the supremum of cardinalities of all finite subsets of the domain which are shattered by $\mathscr C$.
Here are some classical examples.

\begin{center}
\begin{tabular}{l|l}
\hline
Family of sets & VC dimension \\
\hline \hline
Intervals in $\R$ & $2$ \\
Half-spaces in $\R^d$ & $d+1$ \\
Euclidean balls in $\R^d$ & $d+1$ \\
Parallelepipeds in $\R^d$ & $2d+2$ \\
Convex polygons in $\R^d$ & $\infty$ \\
Any family with $n$ sets & $\leq\lg_2n$ \\
Hamming balls in $\{0,1\}^d$ & $\leq d + \lg_2d$ \\
\hline
\multicolumn{2}{c}{
If $\VC({\mathscr C})=c$, $\VC({\mathscr D})=d$:} \\
\hline
$\{\Omega\setminus C\colon C\in {\mathscr C}\}$ & c \\
$\mathscr C\cup\mathscr D$ & $\leq c+d+1$ \\ 
$k$-fold intersections & $\leq 2k\lg(ek)c$\\ 
~~~of members of $\mathscr C$ &
\\
\hline
\end{tabular}
\end{center}

Proofs can be found e.g. in \cite{vidyasagar:2003}, Ch. 4.

Estimating the VC dimension of a particular family of sets is often a non-trivial task. For example, the value of this parameter does not seem to be known for the collection of all cubes in $\R^d$ with sides parallel to the coordinate hyperplanes. More generally, it is tempting to conjecture that the VC dimension of the family of all balls (either open or closed) in a Banach space of finite dimension $d$ equals $d+1$, but the author is unaware of any results beyond the Euclidean case.

\subsection{Uniform convergence of empirical measures}
Recall that the {\em Borel sigma-algebra} of subsets of a metric space $\Omega$ is the smallest family closed under countable intersections and complements and containing all open balls. Elements of the Borel sigma-algebra are called simply {\em Borel subsets}. We will restrict our attention to those families $\mathscr C$ whose elements are Borel subsets of $\Omega$. This assumption guarantees that the value $\mu(C)$ is well-defined for every probability measure $\mu$ on $\Omega$. 

The {\em empirical measure} of $C\in {\mathscr C}$ with regard to a finite sample $X=\{x_1,\ldots,x_n\}$ is just the normalized counting measure 
\[\mu_n(C) = \frac 1n\left\vert \{i\colon x_i\in C\}\right\vert.\] 
The VC dimension of $\mathscr C$ is finite if and only if, with high confidence, the empirical measures of every $C\in {\mathscr C}$ converge uniformly to the true value $\mu(C)$ as the sample size $n$ goes to infinity, no matter what the underlying measure $\mu$ is.

Here is a more exact formulation. A class $\mathscr C$ has the property of {\em uniform convergence of empirical measures}, or is a {\em uniform Glivenko--Cantelli class,} if there is a function $s(\delta,\e)$ ({\em sample complexity} of the class) so that, given a desired precision value $\e>0$ and a risk level $\delta>0$, whenever $n\geq s(\delta,\e)$, one has
\[\sup_{\mu\in P(\Omega)}P\left\{\sup_{C\in{\mathscr C}}\left\vert \mu(C) - \mu_n(C) \right\vert\geq \e\right\}<\delta.
\]
Here $P(\Omega)$ denotes the family of all probability measures on $\Omega$. We quote the following as stated in \cite{vidyasagar:2003}, Theorem 7.8.

\begin{theorem}[Uniform Glivenko--Cantelli theorem]
\label{th:ugc}
A concept class $\mathscr C$ is uniform Glivenko--Cantelli if and only if $d=\VC({\mathscr C})<\infty$, in which case
\[s(\delta,\e) \leq \max\left\{\frac{8d}{\e}\lg\frac{8e}{\e},\frac 4{\e}\lg\frac{2}{\delta} \right\}.
\]
\end{theorem}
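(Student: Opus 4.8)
The plan is to prove the two implications separately: that a finite VC dimension is \emph{necessary} for the uniform Glivenko--Cantelli property, and that it is \emph{sufficient}, the latter carrying essentially all of the content, including the sample complexity estimate.

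For necessity I would argue by contraposition, assuming $\VC(\mathscr C)=\infty$ and exhibiting, for every sample size $n$, a measure that defeats uniform convergence. Since the VC dimension is infinite, I can choose a finite set $A\subseteq\Omega$ shattered by $\mathscr C$ with $\abs{A}=2n$, and let $\mu$ be the uniform (normalized counting) measure on $A$. Drawing an i.i.d. $n$-sample $x_1,\dots,x_n$ and writing $S=\{x_1,\dots,x_n\}$ for the set of values it takes, shattering guarantees a $C\in\mathscr C$ with $C\cap A=S$; then $\mu_n(C)=1$ while $\mu(C)=\abs{S}/(2n)\leq 1/2$, so that $\sup_{C\in\mathscr C}\abs{\mu(C)-\mu_n(C)}\geq 1/2$ with probability one. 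Consequently no finite $s(\delta,\e)$ can exist for $\e<1/2$, and $\mathscr C$ fails to be uniform Glivenko--Cantelli.

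Sufficiency is the substantial direction, and I would assemble it from the three classical ingredients of Vapnik--Chervonenkis theory. First comes \emph{symmetrization}: introduce an independent ``ghost'' sample of the same size $n$, with empirical measure $\mu_n'$, and show that (for $n\e^2$ bounded below by a small constant)
\[P\Bigl\{\sup_C\abs{\mu(C)-\mu_n(C)}\geq\e\Bigr\}\leq 2\,P\Bigl\{\sup_C\abs{\mu_n(C)-\mu_n'(C)}\geq \e/2\Bigr\},\]
the point being that the right-hand event depends only on the $2n$ pooled sample points, replacing the intractable class $\mathscr C$ by its trace on a finite set. Second comes the \emph{Sauer--Shelah} reduction: conditioned on the pooled points $z_1,\dots,z_{2n}$, the number of distinct sets of the form $\{z_1,\dots,z_{2n}\}\cap C$, $C\in\mathscr C$, is the growth function $\Pi_{\mathscr C}(2n)$, which Sauer--Shelah bounds by $\sum_{i=0}^{d}{2n \choose i}\leq(2en/d)^d$ once $2n\geq d$; this is precisely where the hypothesis $d=\VC(\mathscr C)<\infty$ enters. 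Third comes \emph{randomization}: the symmetrized distribution is invariant under independently swapping each pair $x_i\leftrightarrow x_i'$, so conditioning on the pooled points and applying Hoeffding's inequality to each of the $\leq\Pi_{\mathscr C}(2n)$ fixed sets, followed by a union bound, yields
\[\sup_{\mu\in P(\Omega)}P\Bigl\{\sup_C\abs{\mu(C)-\mu_n(C)}\geq\e\Bigr\}\leq 4\Bigl(\frac{2en}{d}\Bigr)^{d}\exp\Bigl(-\frac{n\e^2}{8}\Bigr).\]

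It remains to force the right-hand side below $\delta$ and solve for $n$. Splitting the exponent so that one part swallows the polynomial factor $(2en/d)^d$ and the other the confidence term $4/\delta$ produces a bound that is the maximum of a term of order $(d/\e)\lg(1/\e)$ and a term of order $(1/\e)\lg(1/\delta)$, matching the stated form. I expect the genuinely delicate points to be, first, the combination of symmetrization with the Sauer--Shelah reduction --- turning a supremum over an uncountable family into a finite union bound driven solely by the VC dimension --- and, second, the bookkeeping needed to extract the sharp constants; in particular, obtaining the dependence \emph{linear} in $1/\e$ shown in the statement (rather than the $1/\e^2$ produced by the crude two-sided Hoeffding estimate above) calls for the finer one-sided, relative-deviation version of the basic inequality. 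Routine measurability of the suprema for Borel families $\mathscr C$ is assumed throughout.
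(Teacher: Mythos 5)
Your qualitative argument is the classical Vapnik--Chervonenkis proof, and it is sound: the contrapositive via a shattered $2n$-point set carrying uniform counting measure (giving $\sup_{C}\abs{\mu(C)-\mu_n(C)}\geq 1/2$ almost surely), and the chain symmetrization $\to$ Sauer--Shelah $\to$ random swaps plus Hoeffding, are exactly the standard route. Note that the paper itself offers no proof at all --- it quotes the theorem verbatim from Vidyasagar (Theorem 7.8), remarking only that ``one of the components of the proof is the concentration of measure in the Hamming cube $\{0,1\}^n$.'' That remark refers precisely to your third ingredient: conditioning on the pooled $2n$ points and swapping each pair $x_i\leftrightarrow x_i'$ independently is an application of Chernoff--Hoeffding concentration on the cube $\{0,1\}^n$. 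So your proposal supplies, in essentially the intended form, the argument the paper delegates to the literature.

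The genuine gap is quantitative, and you half-acknowledge it yourself. The inequality you display, $4(2en/d)^{d}\exp(-n\varepsilon^{2}/8)$, solves to $s(\delta,\varepsilon)=O\bigl(\varepsilon^{-2}(d\lg(1/\varepsilon)+\lg(1/\delta))\bigr)$, and no amount of bookkeeping turns this into the stated $\max\bigl\{\frac{8d}{\varepsilon}\lg\frac{8e}{\varepsilon},\frac{4}{\varepsilon}\lg\frac{2}{\delta}\bigr\}$: for the two-sided deviation $\sup_{C}\abs{\mu(C)-\mu_n(C)}$ as defined in the paper, a sample complexity linear in $1/\varepsilon$ is \emph{impossible} even for a single set, since when $\mu(C)=1/2$ estimating a Bernoulli$(1/2)$ mean to within $\varepsilon$ at confidence $1-\delta$ requires $\Omega(\varepsilon^{-2}\log(1/\delta))$ samples by standard binomial anti-concentration. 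The quoted numerical bound has exactly the form of the Blumer--Ehrenfeucht--Haussler--Warmuth sample complexity for PAC learning with a consistent hypothesis --- a one-sided, realizable-case statement provable via Vapnik-style relative-deviation inequalities such as $P\bigl\{\sup_{C}\,(\mu(C)-\mu_n(C))/\sqrt{\mu(C)}>\varepsilon\bigr\}\leq 4\,\Pi_{\mathscr C}(2n)\,e^{-n\varepsilon^{2}/4}$ --- so the mismatch lies in the theorem's transcription as much as in your sketch. Your closing sentence correctly diagnoses that a one-sided relative-deviation bound is what yields $1/\varepsilon$ dependence, but merely gesturing at it does not derive the displayed constant; as written, your proof establishes the equivalence (UGC iff $\VC({\mathscr C})<\infty$) together with an $\varepsilon^{-2}$-type sample bound, and to match the statement you would have to either carry out the relative-deviation argument in the regime it actually covers or note explicitly that the displayed bound pertains to the realizable PAC setting rather than to uniform two-sided convergence of empirical measures.
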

One of the components of the proof is the concentration of measure in the Hamming cube $\{0,1\}^n$.

Let us remark that similar results can be stated and proved for {\em function classes,} that is, collections $\mathscr F$ of functions from the domain $\Omega$ to the interval $[0,1]$. The role of VC dimension is taken over by other combinatorial parameters, such as the {\em fat shattering dimension}. We will not enter into details.

Among a great selection of books treating VC theory, let us mention encyclopaedic sources \cite{vidyasagar:2003} and \cite{devroye:96}, a classical monograph  \cite{vapnik:98}, and a lighter, but very well-written \cite{anthbart:99}.

\section{The curse of dimensionality}
\subsection{Pivot tables}
\subsubsection{Reduction and access overhead}
Let $(\Omega,\rho,X)$ be a similarity workload, $\Upsilon$ a metric space, and $f\colon\Omega\to\Upsilon$ a $1$-Lipschitz function. If queries in $\Upsilon$ are easier to process than in $\Omega$, then it makes sense, given a range query $(q,\e)$ in $(\Omega,X)$, to run a $(f(q),\e)$ range query in $(\Upsilon,f(X))$, retrieving all datapoints $x$ with $f(x)$ within the distance of $\e$ of $f(q)$, and then check them against the condition $\rho_{\Omega}(q,x)<\ve$. The $1$-Lipschitz property of $f$ guarantees that no true hits will be missed.

In this way, the function $f$ can be viewed as a {\em projective reduction} of the exact similarity search problem to the new workload $(\Upsilon,f(X))$. This viewpoint is developed in some detail e.g. in \cite{PeSt06}.  The {\em access overhead} of the reduction $f$ is defined as
\[\mbox{acc}_f(q) = \abs{X\cap f^{-1}(B_{\ve}(f(q)))} - \abs{X\cap B_{\ve}(q)}.\]
This simple and well-known idea on its own can be surprisingly efficient, cf. \cite{StPe07}.

\subsubsection{Pivot-based reduction to $\ell^\infty(k)$}
\label{s:pivot}
Every finite collection $f_1,f_2,\ldots,f_k$ of 1-Lipschitz functions on  $(\Omega,\rho)$ defines a $1$-Lipschitz mapping $f=\Delta_{i=1}^k f_i$ from $\Omega$ to $\ell^\infty(k)$ via the formula
\[f(x) = (f_1(x),f_2(x),\ldots,f_k(x)).\]
Here $\ell^\infty(k)$ is the vector space $\R^k$ equipped with the norm $\norm x_{\infty} = \max_{i=1}^k\abs{x_i}$. 
If the $f_i$ are distance functions from pivot points $p_i\in\Omega$,
the resulting mapping $f$ is of the form
\begin{equation}
\label{eq:map}
f(x) =(d(x,p_1),\ldots,d(x,p_k))\in\ell^\infty(k).\end{equation}
In \cite{vidal:86}, it was suggested to use a reduction of this form in case where the distance computations in $\Omega$ are so expensive that even a simple sequential scan of the image $f(X)$ in $\ell^\infty(m)$ is computationally cheaper. This idea was analyzed for more general similarity measures than metrics in \cite{farago:93}. By combining it with other access methods on the space $\ell^\infty(m)$, further new indexing methods have been developed, see e.g. \cite{chavez:99}. 

A $m$-NN similarity query is processed in $(\Omega,d,X)$ in time
\[k+\ell+(\mbox{acc}_f(q)+m).\]
Here the first term stands for the calculation of $k$ distances from a query point $q$ to the pivots and $\ell$ is the processing time of a rectangular query in $\ell^\infty(k)$, while the latter expression lists the number of distance computations in $\Omega$ needed to separate false hits from $k$ true positives. 
A classical paper on optimizing the pivot selection is \cite{BNC:03}.

\subsubsection{Lower query time bounds for pivot tables}
Our next result (a slightly corrected version of the main theorem in \cite{{VolPest09}}) is valid not only for the Hamming cube which is a testbed for asymptotic analysis of performance of indexing schemes, but also for the Euclidean space $\R^d$ with the gaussian measure, the cube $[0,1]^d$, and so forth.

\begin{theorem}
\label{th:pivotcurse}
In addition to the assumptions of Subs. \ref{ss:assumptions}, suppose also that the VC dimension of the family of all balls in $\Omega$ is $O(d)$. Any pivot table with $k=o(n/d\log n)$ pivots will return an expected average number of $\Omega(n)$ datapoints. Consequently, the average total complexity of the performance of any pivot table for the resulting workload is $\Omega\left({n}/{d\log n}\right)$.
\end{theorem}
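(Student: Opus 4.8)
The plan is to show that, with asymptotically high confidence, the pivot filter retains at least half of the dataset for a typical range query, and then to read off the complexity bound. Throughout, fix the query radius $\e>0$ as a constant on the scale of ${\mathrm{CharSize}}(\Omega)$, which by the empty space paradox (Theorem \ref{th:empty}) is the relevant regime for answering nearest-neighbour queries. For pivots $p_1,\ldots,p_k$ a datapoint $x$ is retained by the filter exactly when $\abs{\rho(x,p_i)-\rho(q,p_i)}<\e$ for every $i$, so $x$ is discarded iff $x\in\bigcup_{i=1}^k D_i$ with $D_i=\{x\colon\abs{\rho(x,p_i)-\rho(q,p_i)}\geq\e\}$. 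The number of returned candidates is then $n\bigl(1-\mu_n(\bigcup_i D_i)\bigr)$, and the target is to force $\mu_n(\bigcup_i D_i)\leq\frac12$.

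First I would invoke concentration. Each function $\rho(\cdot,p_i)$ is $1$-Lipschitz, so writing $M_i$ for its median, inequality (\ref{eq:mf}) gives $\mu(E_i)\leq 2\alpha_{\Omega}(\e/2)\leq 2\exp(-\Omega(\e^2 d))$, where $E_i=\{x\colon\abs{\rho(x,p_i)-M_i}\geq\e/2\}$; crucially this bound holds uniformly over the choice of pivot, so it is unaffected by the pivots being selected as functions of the dataset. A union bound then shows that a fresh query $q\sim\mu$ satisfies $\abs{\rho(q,p_i)-M_i}<\e/2$ for all $i$ with probability at least $1-2k\exp(-\Omega(\e^2 d))$, which tends to $1$ because $k$ is subexponential in $d$. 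For such a \emph{typical} query the triangle inequality yields $D_i\subseteq E_i$.

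The bridge from the true measure $\mu$ to the empirical measure $\mu_n$ is supplied by VC theory. Each $E_i$ is the complement of the open annulus $\{x\colon M_i-\e/2<\rho(x,p_i)<M_i+\e/2\}$, itself a difference of two balls; by the hypothesis that balls in $\Omega$ have VC dimension $O(d)$, together with the combinatorial bounds recorded in Subsection \ref{ss:vc}, the class $\mathscr C$ of all such annuli and their complements satisfies $\VC(\mathscr C)=O(d)$. Feeding $\VC(\mathscr C)=O(d)$ as the dimension parameter into Theorem \ref{th:ugc} with target precision $\e'=1/(4k)$ and risk $\delta$, one obtains $\sup_{C\in\mathscr C}\abs{\mu(C)-\mu_n(C)}\leq\frac1{4k}$ with confidence $\geq 1-\delta$ as soon as $n\geq s(\delta,1/(4k))=O(dk\log k)$. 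This last inequality is precisely the hypothesis $k=O(n/d\log n)$.

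Combining the three ingredients, on the intersection of the typical-query event and the uniform-convergence event I would estimate
\[
\mu_n\Bigl(\bigcup_{i=1}^{k} D_i\Bigr)
\leq\sum_{i=1}^{k}\mu_n(E_i)
\leq\sum_{i=1}^{k}\Bigl(\mu(E_i)+\frac{1}{4k}\Bigr)
\leq 2k\exp(-\Omega(\e^{2}d))+\frac14 .
\]
Since $k$ is subexponential in $d$, the first term vanishes as $d\to\infty$, so $\mu_n(\bigcup_i D_i)\leq\frac12$ and at least $n/2$ candidates are returned. As the uniform-convergence event holds with asymptotically high confidence over the random dataset and, conditionally on it, the typical-query event has probability $1-o(1)$ over $q\sim\mu$, the expected (over the query) number of returned datapoints is at least $\frac{n}{2}(1-o(1))=\Omega(n)$. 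For the complexity statement, every pivot table performs at least $k$ pivot distance computations and must then verify each returned candidate against $\rho(q,x)<\e$: if $k\leq c\,n/d\log n$ (the threshold from the previous step) the verification cost is $\Omega(n)=\Omega(n/d\log n)$, while if $k> c\,n/d\log n$ the pivot computations alone cost $\Omega(n/d\log n)$, so in all cases the average total complexity is $\Omega(n/d\log n)$. The step I expect to be the main obstacle is precisely this uniform-convergence argument: one must guarantee that concentration, a statement about $\mu$, transfers to the random sample even though the pivots and hence the sets $E_i$ depend on that same sample, and the sample-complexity bookkeeping of Theorem \ref{th:ugc}, driven by $\VC(\mathscr C)=O(d)$ and the choice $\e'=1/(4k)$, is exactly what produces the threshold $k\sim n/d\log n$.
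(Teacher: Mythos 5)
Your proof is correct, and it rests on the same three pillars as the paper's argument --- concentration of the pivot distance functions $\rho(\cdot,p_i)$ around their medians, transfer from $\mu$ to the empirical measure via Theorem~\ref{th:ugc}, and the observation that a typical query lies deep inside every shell so that no pivot can discard a typical datapoint --- but the VC bookkeeping is organized differently, and the difference is worth noting. The paper forms the single giant $k$-fold intersection $S=\cap_i S_i$ of shells of width $\ve_M$ and bounds the VC dimension of the class of all $k$-fold intersections of spherical shells by $2k\lg(ek)\,O(d)=o(n)$ (the table of Subs.~\ref{ss:vc}), then applies the uniform Glivenko--Cantelli theorem once, at constant precision, to conclude $\mu_n(S)\to 1$; you instead keep the $k$ annulus complements $E_i$ separate, apply Theorem~\ref{th:ugc} to the fixed class of single annuli and their complements (VC dimension $O(d)$, again by the table) at the finer precision $1/(4k)$, and close with the union bound $\mu_n(\cup_i D_i)\le 2k\exp(-\Omega(\e^2 d))+1/4$. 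Both mechanisms insert the $k$-dependence exactly once and yield the same threshold $dk\log k=o(n)$, i.e.\ $k=o(n/d\log n)$; your version has the minor advantage that the class fed to VC theory does not grow with $k$, while the paper's version pays the $k$-dependence in the VC dimension and keeps the precision coarse. Your explicit remark that uniform convergence over the whole class neutralizes the data-dependence of the pivots is precisely the right resolution of the main subtlety, and your two-case split (many pivots versus many returned candidates) is the intended reading of ``consequently'' in the complexity claim. The one place where the paper is sharper is the treatment of the query radius: rather than fixing a constant $\e$ and appealing informally to the relevant regime, it takes $\ve_M$ to be the median of $\ve_{NN}$, restricts to the set $Q$ of queries $q\in S$ with $\ve_{NN}(q)\ge\ve_M$ (measure $\ge 1/2-o(1)$), and observes that any \emph{non-empty} range query centred at such a $q$ forces $\e\ge\ve_M$, so shells of width $\ve_M$ automatically defeat the filter; this derives, rather than assumes, the lower bound on the radius, and combined with Theorem~\ref{th:empty} (which gives $\ve_M\to{\mathrm{CharSize}}(\Omega)$) it closes off the possibility of an adversarially small $\e$. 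Folding that coupling into your write-up would make it airtight.
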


\begin{proof}
Assume the number of pivots $k$ is $o(n/d\log n)$.
Let $\ve_M$ denote the median value of the function $\ve_{NN}$, so that for at least half query points $q$ the distance to the NN in $X$ is $\geq\ve_M$. For each pivot $p_i$, $i=1,2,\ldots,k$, denote $\rho_i^M$ the median value of the distance function $\rho(p_i,-)$. 
Because of concentration, the mesure of the spherical shell 
\[S_i = \{q 
\colon \rho_i^M-\ve_M/2<\rho(p_i,q)<\rho_i^M+\ve_M/2\}\]  is $1-\exp(-\Omega(\ve_M^2 d))$, and the complement to the intersection, $S=\cap_iS_i$, of all $k$ shells has measure
\[
o(n/d)\exp(-\Omega(\ve_M^2 d))= \exp(-\Omega(\ve_M^2 d)),\]
since $n$ is subexponential in $d$.  
\begin{figure}[ht]
\begin{center}
\scalebox{0.25}[0.25]{\includegraphics{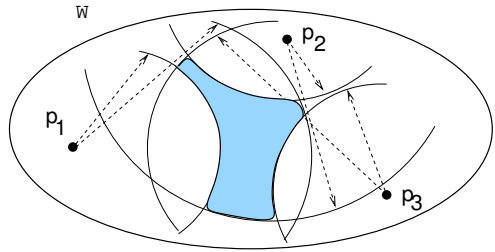}} 
\caption{An intersection of spherical shells.\label{fig:pivots}}
\end{center}
\end{figure} 
Thus, among all $k$-fold intersections of spherical shells (Figure \ref{fig:pivots}), we have found a giant one, whose $\mu$-measure is nearly one.

To assure that this intersection contains an accordingly high proportion of datapoints, consult the table in Subs. \ref{ss:vc} to deduce that the family of all $k$-fold intersections of spherical shells in $\Omega$ has VC dimension not exceeding $2k\lg(ek)O(d)=o(n)$. By Theorem \ref{th:ugc}, the empirical measure $\mu_n(S)$ approaches $\mu(S)$ and therefore $1$ with high confidence as $d\to\infty$. 

The measure of the set $Q$ of query points $q\in S$ whose distance to the nearest neighbour in $X$ is greater than or equal to $\ve_M$ is at least $1/2 - \exp(-O(d)\e^2)$. For every non-empty range query $(q,\e)$ where $q\in Q$, all datapoints belonging to $S$, that is, most datapoints of $X$, have to be returned. This gives an expected average total complexity $\Omega(n)$ under our assumption on the number of pivots.
\end{proof}

Notice that we allow the pivots $p_i$ to be arbitrary points of the domain $\Omega$. If we require that pivots be chosen from the dataset $X$, then the set $S$ in the above proof will with high confidence contain $n-k$ datapoints by Theorem \ref{th:empty} and Proposition \ref{p:simplex}, and we obtain (without using VC theory):

\begin{corollary} Under the assumptions of Subs. \ref{ss:assumptions}, if all pivots $p_i$ belong to the dataset $X$, then the expected total complexity of the performance of the resulting pivot table is $n(1-o(n))$.
\end{corollary}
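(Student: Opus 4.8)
The plan is to follow the proof of Theorem \ref{th:pivotcurse} almost line for line, replacing its single VC-theoretic ingredient by a direct appeal to Proposition \ref{p:simplex}. Recall that Theorem \ref{th:ugc} was invoked there for one reason only: the shells $S_i$ depend on the randomly located pivots, so passing from the $\mu$-measure of the giant intersection $S=\bigcap_i S_i$ to the number of \emph{datapoints} it contains required uniform convergence of empirical measures over the class of $k$-fold shell intersections. Once the pivots are forced to lie in $X$, this transfer is no longer needed, and the argument becomes purely geometric.

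First I would fix $\ve>0$ and apply Proposition \ref{p:simplex}: with asymptotically high confidence every pairwise distance among the $n$ datapoints lies in the band ${\mathrm{CharSize}}(\Omega)\pm\ve$. Since each pivot $p_i$ is itself a datapoint, this says precisely that $\rho(p_i,x)\in[{\mathrm{CharSize}}(\Omega)-\ve,\,{\mathrm{CharSize}}(\Omega)+\ve]$ for every pivot $p_i$ and every one of the $n-k$ remaining datapoints $x$. Because the pivots, being datapoints, are $\mu$-distributed and the distance functions $\rho(p_i,-)$ concentrate, each median $\rho_i^M$ is within $o(1)$ of ${\mathrm{CharSize}}(\Omega)$, uniformly over the (subexponentially many) pivots by a union bound. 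Hence $\abs{\rho(p_i,x)-\rho_i^M}<\ve+o(1)<\ve_M/2$ for $d$ large, so each non-pivot datapoint $x$ falls inside every shell $S_i$. This gives $\abs{X\cap S}\geq n-k$ with asymptotically high confidence, with no VC input whatsoever.

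Next I would reuse the query-side half of the previous proof verbatim. By the gaussian concentration bound together with inequality (\ref{eq:mf}), each shell $S_i$ has $\mu$-measure $1-\exp(-\Omega(\ve^2 d))$, so a union bound over the subexponentially many pivots yields $\mu(S)\geq 1-\exp(-\Omega(\ve^2 d))$; intersecting with the empty-space estimate of Theorem \ref{th:empty}, the set $Q$ of query points lying in $S$ and having nearest-neighbour distance at least $\ve_M$ has measure at least $1/2-\exp(-\Omega(\ve^2 d))$. For $q\in Q$ a non-empty range query must have radius exceeding $\ve_M$; but each non-pivot datapoint $x$ and the query $q$ now lie in the \emph{same} shell $S_i$ for every $i$, so $\abs{\rho(p_i,q)-\rho(p_i,x)}<\ve_M$ and the discarding test fails for all pivots at once. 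Thus none of the $n-k$ non-pivot datapoints can be eliminated; the table returns and checks all of them, yielding the asserted total complexity $n-k=n(1-o(1))$.

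The step I expect to demand the most care is the matching of scales underlying the second and third paragraphs: I must verify that the clustering radius $\ve$ of Proposition \ref{p:simplex} can be taken strictly below the fixed shell half-width $\ve_M/2$ \emph{uniformly} as $d\to\infty$, and that all medians $\rho_i^M$ stay within $o(1)$ of ${\mathrm{CharSize}}(\Omega)$ simultaneously. Both rest on the empty-space paradox, which forces $\ve_M\to{\mathrm{CharSize}}(\Omega)=O(1)$, so that $\ve_M$ is bounded away from $0$ for large $d$ and any sufficiently small fixed $\ve$ suffices; the uniformity over the pivots is then a harmless union bound, since their number is at most $n$ and hence subexponential in $d$. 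Everything else is inherited unchanged from the proof of Theorem \ref{th:pivotcurse}.
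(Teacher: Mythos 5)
Your proposal is correct and takes essentially the same route as the paper: the paper's own justification is a one-sentence remark that, once the pivots are required to lie in $X$, the set $S$ from the proof of Theorem \ref{th:pivotcurse} contains $n-k$ datapoints with high confidence by Theorem \ref{th:empty} and Proposition \ref{p:simplex}, so the uniform Glivenko--Cantelli step can be dropped. You carry out exactly this substitution, merely filling in the scale-matching details (the medians $\rho_i^M$ and $\ve_M$ tending to ${\mathrm{CharSize}}(\Omega)$, and the union bound over the subexponentially many pivots) that the paper leaves implicit.
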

   
\subsubsection{A remark on results of \cite{farago:93}}
The above lower bounds agree with an exponential in $d$ upper bound of $k+c^d$ derived in the influential paper \cite{farago:93}, Theorem 3 within a similar model, with no restriction on a number $n$ of datapoints, and with $d$ a dimension parameter defined by a certain measure distribution density condition verified, e.g. by the Hamming cube $\{0,1\}^d$ or the Euclidean sphere $\s^d$. Here $c$ is a constant depending on $\Omega$, the smallest distortion parameter of a $1$-Lipschitz embedding $f\colon\Omega\to\ell^{\infty}(k)$:
\[
\forall x,y\in X,
\]
\[
\norm{f(x)-f(y)}_{\infty}\leq\rho(x,y)\leq c\norm{f(x)-f(y)}_{\infty}.
\]
However, the usefulness of the result is limited because of an imprecise claim ({\em loc. cit.}, Example 1) that for a bounded subset $X$ of $\ell^2(d)$ there always exists a 1-Lipschitz function $f\colon X\to\ell^\infty(d+2)$ having distortion $c\leq 2$. In fact, an optimal constant here is on the order $\Omega(\sqrt d)$ (see \ref{s:distortion}). As a result, the query performance estimate for the Euclidean domains made in Remark after the main Theorem 3, {\em loc.cit.}, becomes superexponential in $d$ and thus meaningless.

This misconception has led to some further confusion, cf. remarks made in \cite{BNC:03} (p. 2358, end of first paragraph on the r.h.s., and at the beginning of Section 5).

\subsection{Hierarchical metric tree schemes}
\subsubsection{Metric trees}
For a finite rooted tree $T$ we denote $L(T)$ the set of leaves of $T$ and $I(T)$ the set of inner nodes. The symbol $\ast$ will denote the root node of $T$.

Let $\mathscr F$ be a class of $1$-Lipschitz functions on $\Omega$ (possibly partially defined). 
A {\em metric tree} ({\em of type $\mathscr F$}) for a workload $(\Omega,\rho,X)$ is a hierarchical indexing structure consisting of\\[1mm]
$\bullet$  a finite binary rooted tree $\mathcal T$,\\
$\bullet$ an assignment of a function $f_t\in{\mathscr F}$ (a {\em pruning,} or {\em decision function}) to every inner node $t\in I(T)$, and\\
$\bullet$  a collection of subsets $B_t\subseteq\Omega$, $t\in L(T)$ ({\em bins}), covering the dataset: $X\subseteq\cup_{t\in L(T)}B_t$.

Since we assume that the tree $T$ is binary, it can be identified with a sub-tree of the prefix tree, that is, a subset of binary strings $\e_1\e_2\ldots\e_k$, $0\leq k\leq n$, where $\e_i=\pm 1$ for all $i$. 

At each inner node $t=\e_1\e_2\ldots\e_l$ the value of the pruning function $f_t$ at the query center $ q $ is evaluated. The condition $f_t( q )>\e$ gurantees that the child node $t(-1)=\e_1\e_2\ldots\e_l(-1)$ need not be visited, because all elements $x$ of the bins indexed with the descendants of $t(-1)$ are at a distance $>\e$ from $ q $. Indeed, assuming $x\in B_\e( q )$, one has
\[\abs{f_t(x)-f_1( q )}\leq d(x, q )\leq\e.
\]
Similarly, if $f_t( q )<-\e$, then the node $t1=\e_1\e_2\ldots\e_k1$ can be pruned, because no bin labelled with descendants of $t1$ can possibly contain a point within the range $\e$ from $ q $. 

However, if $f_t( q )\in [-\e,\e]$, then no pruning is possible and both children nodes of $t$ have to be visited. The search branches out. In the presence of concentration, the amount of branching is considerable, and results in dimensionality curse. 

The M-tree \cite{CPZ:97} is by now a classical example of a metric tree. However, metric tree-type indexing schemes are very numerous, cf. Sections 2.1-2.4 in \cite{ZADB:06} or Section 4.5 in \cite{samet}.

\subsubsection{Lower bounds for metric trees}

For a function $f$ and a real number $t$, denote $\mathbf{1}_{f\leq 1}=\{x\in\mathrm{dom}(f)\colon f(x)\leq t\}$.

\begin{theorem}
\label{th:mtree}
In addition to the assumptions of Subs. \ref{ss:assumptions}, let $\mathscr F$ be a class of $1$-Lipschitz functions on the domain $\Omega$ such that the VC dimension of the family of sets $\mathbf{1}_{f\leq t}$, $f\in {\mathscr F}$, $t\in\R$ is ${\mathrm{poly}}(d)$. Then the expected average performance of every metric tree indexing structure of type $\mathscr F$ is superpolynomial in $d$.
\end{theorem}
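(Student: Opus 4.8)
The plan is to adapt the proof of Theorem~\ref{th:pivotcurse}, replacing the fixed family of pivot shells by the node-indexed family of pruning functions, and to convert the concentration of each decision function into the statement that the tree cannot prune away the bulk of the (population, and hence empirical) mass. Fix a query radius $\e$ of the order of the median nearest-neighbour distance $\e_M$; by Theorem~\ref{th:empty} and Proposition~\ref{p:simplex} all relevant distances are then $\approx{\mathrm{CharSize}}(\Omega)$, so such queries are non-empty yet barely so. For each inner node $t$ let $M_t$ be a median of the $1$-Lipschitz function $f_t$ with respect to $\mu$. By the concentration inequality~(\ref{eq:mf}), the ``band'' $\{x:\abs{f_t(x)-M_t}>\e/3\}$ has $\mu$-measure at most $2\alpha_\Omega(\e/3)=\exp(-\Omega(\e^2d))$; call a point \emph{settled at $t$} when it lies outside this band. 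The key local observation is a non-separation dichotomy: if both the query $q$ and a datapoint $x$ are settled at $t$ then $\abs{f_t(q)-f_t(x)}\le 2\e/3<\e$, so the pruning test at $t$ can never discard $x$ from the search for $q$. Equivalently, every pruning step at a settled node throws away only points that are \emph{un}settled at that node.

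Next I would bring in the VC hypothesis exactly as in Theorem~\ref{th:pivotcurse}. By assumption the family of sublevel sets $\mathbf 1_{f\le t}$ has VC dimension ${\mathrm{poly}}(d)$; by the stability table in Subs.~\ref{ss:vc} (complements, unions, $k$-fold intersections) so does the family of bands, and the family of all $K$-fold intersections of bands has VC dimension at most $2K\lg(eK)\,{\mathrm{poly}}(d)$. I would then choose $K=\Theta\!\big(\e n/({\mathrm{poly}}(d)\log n)\big)$, which is still $d^{\omega(1)}$ yet subexponential in $d$, so that this VC dimension stays below the sample-complexity threshold of Theorem~\ref{th:ugc} for sample size $n$. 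After first trimming the tree (deleting every subtree whose bins are empty of datapoints, which can only decrease the cost) so that it has at most $2n\le e^{o(d)}$ nodes, I restrict attention to the top $K$ inner nodes. Let $G$ be the set of points settled at all of them; a union bound gives $\mu(G)\ge 1-K\exp(-\Omega(\e^2d))=1-\exp(-\Omega(\e^2d))$, and, since $G$ is a $K$-fold intersection of bands, the uniform Glivenko--Cantelli theorem upgrades this to $\mu_n(G)\to 1$ with asymptotically high confidence. Thus with high confidence at least $n/2$ datapoints are simultaneously settled at every one of the top $K$ nodes.

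Finally I would extract the lower bound from a global dichotomy on the shape of the visited subtree $V$ of a settled query $q$. Because a branching node contributes two children to $V$ while a pruning node contributes one, the number of visited bins equals $1+\#\{\text{branching nodes in }V\}$; and by the non-separation observation no settled datapoint is ever pruned inside the top portion, so all $\ge n/2$ settled datapoints survive through the top $K$ nodes. Either the settled data already disperses among $\Omega(K)$ visited nodes of the top portion, in which case the cost is $\Omega(K)=d^{\omega(1)}$; or it is funneled through few frontier nodes, in which case a pigeonhole step isolates a heavy subtree to be treated recursively, the recursion terminating because each level either spends $\Omega(K)$ on branching and visited bins or returns an accordingly large bin. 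In both cases a set of queries of measure bounded below by a positive constant forces the search to touch superpolynomially many nodes, giving the claimed expected-average bound.

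The main obstacle is precisely what forces ``superpolynomial'' rather than the clean $\Omega(n)$ of the pivot case: the tree is data-dependent and may be as deep as $\Omega(n)$, so one cannot demand that a point be settled with respect to all of its nodes at once --- the VC dimension of such an intersection, of order $n\,{\mathrm{poly}}(d)$, overshoots the sample size and renders Theorem~\ref{th:ugc} vacuous. The resolution is to spend the entire VC budget on a single layer of $K=n/{\mathrm{poly}}(d)$ nodes, whose size is itself of the target order; making the funneling-and-recursion in the thin-and-deep regime rigorous, so that escaping into the uncontrolled deep part of the tree already costs $\Omega(K)$ in branching, is the delicate heart of the argument.
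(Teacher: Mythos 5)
Your first two paragraphs assemble, essentially correctly, the same toolkit the paper uses: concentration of the $1$-Lipschitz decision functions via (\ref{eq:mf}), the non-separation observation (a settled query cannot prune a branch containing a settled datapoint), and the passage from $\mu$ to $\mu_n$ via Theorem \ref{th:ugc} applied to intersections of bands, with the VC arithmetic of Subs.~\ref{ss:vc}. The gap is your third paragraph, and it is genuine: the ``global dichotomy'' is asserted rather than proved, and its load-bearing claim --- that escaping the controlled top-$K$ portion ``already costs $\Omega(K)$ in branching'' --- is false. Since bins are merely subsets covering $X$ and may repeat, the tree can stuff every off-path subtree of a bushy top region with bins equal to $\{x_1\}$ for a single sacrificial datapoint $x_1$, and prune each such subtree with a test built from $\rho(x_1,\cdot)$: the point $x_1$ is unsettled for its own distance function (its value is $0$ while the median is $\approx 1$), so this pruning is consistent with your non-separation dichotomy and fires for all but a $\exp(-\Omega(\e^2 d))$-measure set of queries. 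The search then exits the entire top-$K$ region along a path of length $O(\log K)$, all $\geq n/2$ settled datapoints land in one uncontrolled subtree, your VC budget is spent, and --- as you yourself concede --- fresh applications of Theorem \ref{th:ugc} to successive layers cannot be afforded for the unboundedly many rounds this construction permits. (The same construction shows your preliminary claim that the trimmed tree has at most $2n$ nodes is unjustified: non-empty bins can repeat, so trimming empty subtrees bounds nothing.)

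The paper closes exactly this hole by a move you never make: it argues by \emph{contradiction}, fixing a ${\mathrm{poly}}(d)$ rate $f(d)$ bounding performance, and uses that hypothesis to control the \emph{shape} of the tree --- bins indexed by strings of length exceeding $f(d)$ must have asymptotically negligible content (reaching them already costs more than $f(d)$), so one may assume the tree has depth $f(d)$. After truncation, the VC class is not your node-indexed family of bands but the family of all possible \emph{bins}, each an intersection of only ${\mathrm{poly}}(d)$ sets of the form $\mathbf{1}_{f\leq t}$ and their complements, hence of VC dimension ${\mathrm{poly}}(d)\ll n$ --- no layer-by-layer budgeting is needed. The superpolynomial bound then comes not from counting branching nodes but from counting bin accesses: UGC plus the performance hypothesis force bin measures not to be skewed (a bin of large measure contains too many points, which must be scanned); the Gromov--Milman observation (as in Appendix A) shows each non-negligible bin has an $\e_M$-neighbourhood of almost full measure; hence query centres whose $\e_{NN}$-neighbourhood meets $d^{\omega(1)}$ bins have measure $\geq 1/2-o(1)$, and every such bin must be accessed, if only to verify it is empty. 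This bin-level argument is immune to the funneling gadget above, which your branching-count framework cannot handle. One further simplification you missed: the query is a fresh $\mu$-sample independent of the already-built tree, so settledness of $q$ at all nodes needs only a union bound over the domain measure; VC theory is required only on the datapoint side, where the tree is data-dependent.
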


That the above combinatorial assumption on the class $\mathscr F$ is sensible, follows from a theorem of Goldberg and Jerrum \cite{GJ}. Consider a parametrized class
\[{\mathscr F}=\{x\mapsto f(\theta,x)\colon\theta\in \R^s\}\]
for some $\{0,1\}$-valued function $f$. Suppose that, for each input $x\in\R^s$, there is an algorithm that computes $f(\theta,x)$, and this computation takes no more than $t$ operations of the following types:
\\
$\bullet$ the arithmetic operations $+,-,\times$ and $/$ on real numbers,
\\
$\bullet$
jumps conditioned on $>$, $\geq$, $<$, $\leq$, $=$, and $\neq$ comparisons of real numbers, and
\\
$\bullet$ output $0$ or $1$.

Then $\VC({\mathscr F})\leq 4s(t+2)$. 

Essentially, the above result states that a class of binary functions that can be computed in polynomial time taking a parameter value of polynomial length will have a polynomial VC dimension.

\hskip .3cm
{\sc On the proof of Theorem \ref{th:mtree}.} (For details, see \cite{pestov:08b}.) Suppose the conclusion is false, and fix a particular ${\mathrm{poly}(d)}$ rate, $f(d)$, bounding from above the performance of a metric tree on any sample path. 
As the total content of bins $B_t$ indexed with strings $t$ of length exceeding the rate $f(d)$ has to be asymptotically negligible, we can assume without loss in generality that the indexing tree has depth $f(d)$. 

Without loss in generality every bin can be replaced with an intersection of a family of sets of the form $\mathbf{1}_{f\leq t}$, $f\in {\mathscr F}$ and their complements. This provides a ${\mathrm{poly}(d)}$ upper bound on the VC dimension on the family of all possible bins. 

With high confidence, a bin of a large measure will contain many data points, contradicting the ${\mathrm{poly}(d)}$ performance bound. This leads to conclude that measures of bins cannot be too skewed. Now concentration of measure is used to prove that at least $\mbox{poly}(d)$ bins $B_t$ have size so large that the $\e_M$-neighbourhood of $B_t$ has almost full measure. One deduces further that query centres $q$ whose $\e_{NN}$-neighbourhood meets at least $d^{\omega(1)}$ bins have measure $\geq 1/2-o(1)$. Processing a nearest neighbour query with such a centre $q$ requires accessing all of these bins, let even to verify that some of them are empty. This leads to a contradiction with the assumed uniform performance bound on the algorithm. \qed

\subsection{The curse of dimensionality conjecture}

\subsubsection{The problem}
Of course the above are just particular results only applicable to specific indexing schemes. If one wants to validate the curse of dimensionality once and for all, here is an interesting open problem. 

\begin{conjecture}[cf. \cite{indyk:04}]
Let $X$ be a dataset with $n$ points in the Hamming cube $\{0,1\}^d$.
Suppose $d=n^{o(1)}$ and $d=\omega(\log n)$.  
Then any data structure for exact nearest neighbour search in $X$,
with $d^{O(1)}$ query time, must use $n^{\omega(1)}$ space.
\end{conjecture}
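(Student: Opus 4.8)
Since this is a static data-structure question, the plan is to work in the \emph{cell-probe} model: the structure occupies $S$ cells of $w=d^{O(1)}$ bits each, and a query is answered by $T=d^{O(1)}$ adaptive probes. The target is then equivalent to showing that $T=d^{O(1)}$ forces $S=n^{\omega(1)}$, i.e.\ $\log S=\omega(\log n)$.

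The textbook first attempt is the asymmetric communication reduction of Miltersen--Nisan--Safra--Wigderson: Alice holds the query $q$, Bob holds the cell contents, each round Alice sends a $\lceil\log S\rceil$-bit address and Bob a $w$-bit reply, so an Alice-communication lower bound $a$ yields $\log S\geq a/T$. This route is fundamentally blocked in our regime, however: Alice's entire input $q$ is only $d$ bits long, so $a\leq d$, and the method cannot prove $\log S$ larger than $d/T$, which is far below the required $\omega(\log n)$ for the polynomial query times the conjecture permits. Overcoming this ``short-query barrier'' is the real content of the problem.

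The more promising line, and the one most in the spirit of the present paper, is the \emph{metric-expansion / cell-sampling} approach. One samples a $p$-fraction of the $S$ cells; since each query probes only $T$ cells, roughly a $p^{T}$ fraction of queries have \emph{all} their probed cells inside the sample, so their answers are already determined by the $p\,S\,w$ bits stored there. One then argues that these bits cannot encode that many distinct nearest-neighbour answers, because in the Hamming cube neighbouring balls \emph{expand}: the number of essentially different answer patterns forced by an $\e_{M}$-neighbourhood is bounded from below by the isoperimetry of $\{0,1\}^d$, i.e.\ precisely by the concentration function $\alpha_{\Omega}$ introduced earlier. Comparing the information budget $p\,S\,w$ against this expansion count and optimising over $p$ would yield the space--time tradeoff. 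The concentration estimates of Theorem~\ref{th:empty} and Proposition~\ref{p:simplex} are exactly what make a random dataset a hard instance: almost every point of the cube lies at the characteristic distance, so a planted near neighbour is statistically buried in an exponentially large equidistant background and cannot be ``addressed'' by few probes.

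The main obstacle is quantitative and lies entirely in this last step: in the intermediate regime $\log d\ll\log n\ll d$, the expansion of the Hamming metric at the \emph{exact} nearest-neighbour scale, together with the $p^{T}$ loss inherent in cell sampling, is not known to be strong enough to push the space bound from polynomial to \emph{superpolynomial} in $n$. Existing expansion-based bounds are sharp for \emph{approximate} search, where a genuine gap between the near and far radii can be exploited, but that gap collapses in the exact problem; and the communication-based bounds saturate at the short-query barrier above. Closing the gap would require either a sharper isoperimetric ``information-transfer'' estimate tailored to the exact near-neighbour radius, or a direct-sum argument amortising over many queries so as to circumvent the $d$-bit ceiling on the query description --- and the absence of either is precisely why the conjecture remains open.
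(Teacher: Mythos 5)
This statement is the Curse of Dimensionality Conjecture, which the paper explicitly presents as an \emph{open problem} and does not prove: it only fixes the cell probe model as the setting and records the best known partial bounds (query time $\Omega(d/\log\frac{sd}{n})$ for space $s$ \cite{PT}, hence $\Omega(d/\log n)$ for polynomial space \cite{barkol:00}), which are far from the conjectured superpolynomial space requirement. Your proposal is therefore correct in the only sense available: it adopts the same cell-probe framing as the paper, accurately diagnoses why the asymmetric-communication route saturates at the short-query barrier and why cell-sampling/expansion arguments lose their bite at the exact nearest-neighbour radius, and rightly stops short of claiming a proof, in full agreement with the paper's own assessment that the conjecture remains open even for the Hamming cube.
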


The data structure and algorithm are understood in the sense of the {\em cell probe model} of computation (cf. \cite{miltersen,borodin:99}). 

\subsubsection{Cell probe model} 
In the context of similarity search, the model can be described as follows.
An abstract indexing structure for a domain $\Omega$ consists of
\\[2mm]
$\bullet$ 
a collection of cells $C_i$, indexed with a set $I$,  
\\[1mm]
$\bullet$
a dictionary $T=W^\ast$ over an alphabet $W=\{0,1\}^b$,
viewed as a rooted prefix tree,
\\[1mm]
$\bullet$
a computable mapping $t\mapsto i(t)$  from $T$ to $I$ ({\em cell selector}), and
\\[1mm]
$\bullet$ a computable function $f=f_t(\sigma;q)$ (either partially or fully) defined on $T\times \{0,1\}^b\times \Omega$ and taking values in $W$.
\par

For a $t\in T$, one can think of each $f_t$ as a function defined on a subset of $\Omega$ and taking a $b$-bit string $\sigma$ as a parameter, except if $t=\ast$ is the root. A value $f_t(\sigma;q)$ is a child $s$ of the node $t$. 

For every $i$, the cell $C_{i}$ can hold a $b$-bit string. Sometimes $b$ is regarded as constant, but often it is assumed that $b=\Theta(\lg n)$, so that a cell corresponding to a leaf node can store a pointer to a datapoint $x\in X$. 
Occasionally the nearest neighbour problem is replaced with a weaker {\em decision version} (known as {\em near neighbour problem}), whereby a range parameter $\ve_0>0$ is fixed and the algorithm is expected to tell whether there is an $x\in X$ at a distance $<\ve_0$ from the query point. In such a case, a leaf node cell $C_{i}$ will hold a single bit (a ``yes'' or ``no'' answer). 

Building the data structure at the preprocessing stage, given a dataset $X$, consists in storing in every node cell a $b$-bit string.

A memory image of the indexing structure $C_i,i\in I$ is created when the algorithm is initialized. 
Given a query point $q\in\Omega$, the prefix tree $T=W^\ast$ is traversed down to the leaf level beginning with the root. 
At the inner node $t$, the content $\sigma$ of the cell $C_{i(t)}$ is read and passed on to the function $f_t$ as a parameter. The computed value $f_t(\sigma;q)=s\in W=\{0,1\}^b$ indicates a child of $t$ to follow at the next step. When a leaf $l$ is reached, the algorithm halts and returns the contents of $C_{i(l)}$. The query time is the length of the branch traversed, or equivalently the number of cell probes during the execution of the algorithm. The space requirement of the model is the total number of cells, $\abs I$.

The cell probe model is very liberal, as the cost of computing the values of $f$ is disregarded. For this reason, any lower bound obtained under the cell probe  will likely hold under any other model of computation. 


\subsubsection{Current state of the problem}
The best lower bound currently known is $O(d/\log\frac{sd}{n})$, where $s$ is the number of cells used by the data structure \cite{PT}. In particular, this implies the earlier bound $\Omega(d/\log n)$ for polynomial space data structures \cite{barkol:00}, as well as the bound $\Omega(d/\log d)$ for near linear space (namely $n\log^{O(1)}n$). 

\section{Approximate NN search and dimensionality reduction}

Approximate nearest neighbour search \cite{PC:08} is often said to be free from the curse of dimensionality, and the reason is that the (dimensionality) reduction maps $f$ used in indexing are no longer $1$-Lipschitz. Rather, they are what may be called ``probably approximately $1$-Lipschitz'', and sometimes only on a certain distance scale. Such maps no longer exhibit a strong concentration around their means. The price to pay is that we may lose some relevant datapoints, as some distances are typically getting distorted, and so such maps cannot be used for exact NN search.

\subsection{Random projections in the Hamming cube}

Think of the Hamming cube $\{0,1\}^d$ as the set of all binary functions in the space $\ell^1(d)=L^1([d])$, where $[d]=\{1,2,\ldots,d\}$ supports a uniform measure. In other words, we normalize the Hamming distance $d(x,y) = \sharp\{i\colon x_i\neq y_i\}$ by multiplying it by $1/d$. Of course such a normalization has no effect on similarity search.
If the dataset $X\subseteq\{0,1\}^d$ contains $n$ points, then the VC dimension of $X$, viewed as a concept class on $\{1,2,\ldots,d\}$, does not exceed $\lg_2n$. 
According to the uniform Glivenko--Cantelli Theorem \ref{th:ugc}, if $O(\ve^{-2}\lg_2n)$ coordinates of the Hamming cube are chosen at random, then with high confidence the restriction mapping from $X$ to the Hamming cube $\{0,1\}^{O(\ve^{-2}\lg_2n)}$ (under its own normalized Hamming distance)
preserves the pairwise distances to within $\pm\ve$.
Cf. Figure \ref{fig:distortion_h500}.

\begin{figure}[ht]
\centering
\epsfig{file=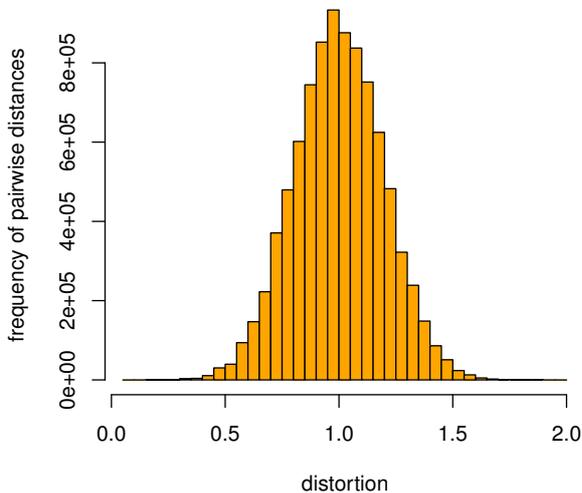, height=3.3in, width=3.3in} 
\caption{Histogram of distortions of all pairwise distances in a random dataset of $n=3,000$ points in the $d=500$ Hamming cube under a projection to a Hamming cube on randomly chosen $k=25$ bits.}
\label{fig:distortion_h500}
\end{figure}

%

The error of $\pm \e$ is additive rather than multiplicative, so the random sampling of the coordinates is only appropriate for ANN search in the range on the order of $d/2$. The construction has to be generalized for all possible ranges $\ell=1,2,\ldots,d$. Such a generalization was developed in \cite{KOR:00}.

Projecting on a randomly sampled subset of $k$ coordinates of the Hamming cube essentially amounts to a linear transformation $x\mapsto xA$, where $A$ is a $d\times k$ matrix with i.i.d. Bernoulli entries assuming values $1$ and $0$ with probabilities $1/d$ and $1-1/d$, respectively. (The operations are carried $\mathrm{mod}\, 2$.) 
One of the key observations of \cite{KOR:00} --- in the form given to it in \cite{vempala}, 7.2 --- is that if the probability $1/d$ is replaced with $1/\ell$, then a random linear transformation $x\mapsto xA\,\mathrm{mod}\,2$, under a suitable normalization, preverves distances on the scale $\ell/2$, $\ell=1,2,\ldots,d$, to within an additive error $\ell\e$, and on a larger scale --- away from it. Since the new cube only contains $2^{O(\ve^{-2}\lg_2n)}$ points, a hash table storing nearest neighbours, together with the reduction map $f$, produces an indexing scheme for $\ell$-range search taking space polynomial in $n$ and answering $(1+\ve)$-approximate queries in time $O(\ve^{-2}\lg_2n)$. 

Another discovery of \cite{KOR:00} is that if on every scale $\ell$ one employs a sufficiently large series of independent projections onto $k$-cubes, then with high confidence one can assure that {\em every} ANN query --- as opposed to {\em most} ANN queries --- will be answered correctly. Finally, a separate indexing scheme is constructed for every range $\ell$. The overall space requirement is still polynomial in $n$, and the running time of the algorithm is $O(d\mathrm{poly}\,\log (dn))$.

\subsection{Random projections in the Euclidean space}

Let $\s^{d-1}$ denote the Euclidean sphere of unit radius in the space $\R^d$. The projection $\pi_1$ on the first coordinate is a $1$-Lipschitz function. For all pairs of points $x,y\in\s^{d-1}$, one has $\abs{\pi_1(x)-\pi_1(y)}\leq \norm{x-y}$, and for exactly one pair of antipodal points the equality is achieved. Now let $x,y\in\s^{d-1}$ be drawn at random. What is the expected value of the distortion of distances $\abs{\pi_1(x)-\pi_1(y)}/ \norm{x-y}$?

\begin{figure}[ht]
\centering
\epsfig{file=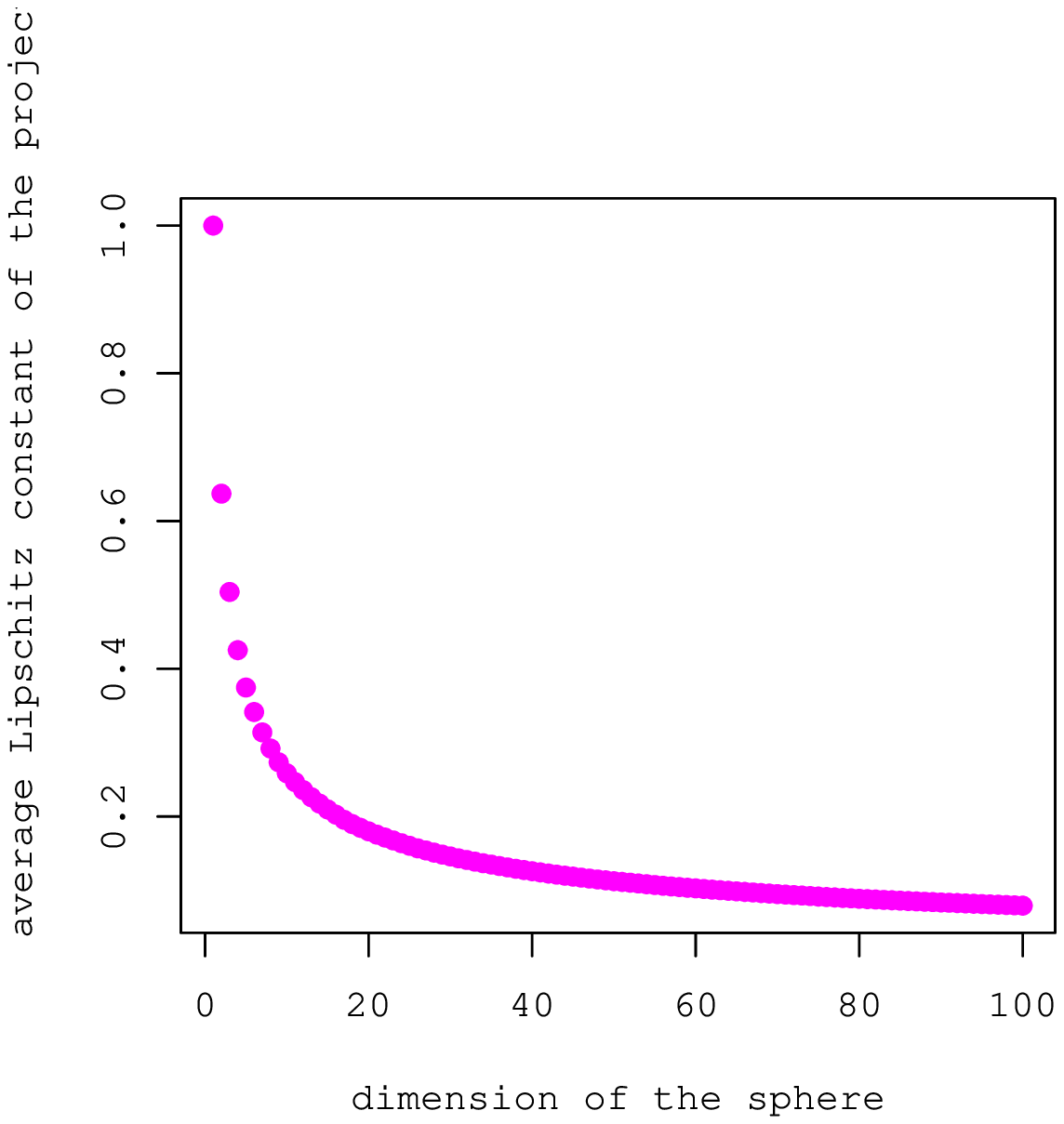, height=3.3in, width=3.3in} 
\caption{The expected distortion of one-dimensional projection of the $d$-dimensional sphere ${\mathbb S}^{d-1}$ over all pairs of points.}
\label{fig:lischitz100sph}
\end{figure}


Figure \ref{fig:lischitz100sph} shows that for a vast majority of pairs of points, the projection distorts distances by the factor $\Theta(1/\sqrt d)$. A geometric explanation, at least at an intuitive level, is simple. Two randomly chosen points on the high-dimensional sphere, because of concentration of measure, are at a distance $\approx \sqrt 2$ from each other. At the same time, half of the points of the sphere project on the interval of length $O(1/\sqrt d)$, and so are contained in the equatorial region (Figure \ref{fig:projection}).

\begin{figure}[ht]
\begin{center}
\scalebox{0.25}[0.25]{\includegraphics{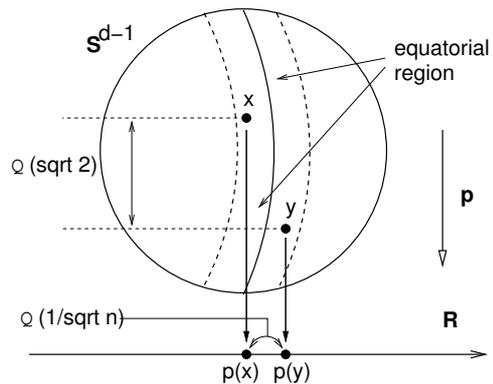}} 
\caption{\label{fig:projection}To the geometry of random projections.}
\end{center}
\end{figure} 

It follows that the expected absolute value of the norm of a projection of a given point $x$ in a random direction is of the order $\Theta(1/\sqrt d)$. Now let $X=\{x_1,\ldots,x_n\}$ be a finite subset of points of the sphere. Denote $Y$ the set of all vectors of the form $x_i-x_j$ whose length is normalized to one. Each $y\in Y$ can be identified with the function $\tilde y\colon z\mapsto \left\vert\langle y,z\rangle\right\vert$ on the unit sphere. If we now think of $\s^d$ as the domain (consisting of one-dimensional projections), then $Y$ plays the role of a finite {\em function class}. Just like for finite concept classes, the combinatorial dimension of $Y$ is of the order $O(\log n)$, and so, by VC theory, the empirical mean on a random sample of $\Theta(\e^{-1}\log n)$ directions will estimate the expectations of all $\tilde y$, $y\in Y$ to within a factor of $\e$ with high confidence. A small number of randomly chosen directions are likely to be nearly pairwise orthogonal because of concentration, so we can instead choose an orthogonal projection to a randomly chosen space of dimension $\Theta(\e^{-1}\lg n)$. Since the projection is a linear map, we get the same estimate, but with a {\em multiplicative error $\e$}, for all pairwise distances between the points of $X$. It remains to work out the meaning of the empirical mean in the above setting in order to obtain the following famous result.

\begin{theorem}[Johnson--Lindenstrauss lemma \cite{JL}] 
Let $\e\in (0,1/2)$ be a real number, and $X=\{x_1,x_2,\ldots,x_n\}$ be a set of $n$ points in $\R^n$. Let $k$ be an integer with $k\geq C\e^{-2}\log n$, where $C$ is a sufficiently large absolute constant. Then there is a mapping $f\colon\R^n\to\R^k$ such that
\[ 
(1-\e)\norm{f(x_i)-f(x_j)}\leq \norm{x_i-x_j} \]
\[\leq (1+\e)\norm{f(x_i)-f(x_j)} \]
for all $i,j=1,2,\ldots,n$. Moreover, as $f$, one can with high confidence choose a suitably renormalized random projection from $\R^n$ to a $k$-dimensional Euclidean subspace.
\end{theorem}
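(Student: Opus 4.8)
The plan is to prove existence by the probabilistic method: I will take $f$ to be a suitably renormalized orthogonal projection onto a uniformly random $k$-dimensional subspace of $\R^n$, show that for each \emph{fixed} pair of points the relative distortion is small with overwhelming probability, and then union-bound over all pairs. Since the projection is linear, it suffices to control the projected length of each normalized difference vector $u_{ij}=(x_i-x_j)/\norm{x_i-x_j}$: writing $f=\sqrt{n/k}\,P$ with $P$ the random projection, the two-sided distance bound in the statement is equivalent to the assertion that $\sqrt{n/k}\,\norm{Pu_{ij}}$ stays within a factor $1\pm\e$ of $1$ for every $i\neq j$ (the precise form in the theorem following by a harmless adjustment of the constant).

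The heart of the matter is a single-vector concentration estimate. Fix a unit vector $u\in\s^{n-1}$ and let $P$ be the orthogonal projection onto a random $k$-subspace. By rotational invariance of the uniform measure on the Grassmannian, $\norm{Pu}$ has the same distribution as $\norm{z_{\le k}}$, where $z$ is drawn uniformly from $\s^{n-1}$ and $z_{\le k}=(z_1,\dots,z_k)$; that is, projecting a fixed vector onto a random subspace is the same as projecting a random point onto a fixed subspace. The map $z\mapsto\norm{z_{\le k}}$ is $1$-Lipschitz on $\s^{n-1}$, so the gaussian concentration bound $\alpha_{\s^{n-1}}(\e)\le\exp(-\Theta(\e^2 n))$ from Section 2 yields
\[
P\left\{\left\vert \sqrt{n/k}\,\norm{Pu}-1\right\vert>\e\right\}\le 2\exp\!\left(-\Theta(\e^2 k)\right),
\]
once one checks that the median of $\norm{z_{\le k}}$ equals $\sqrt{k/n}\,(1+o(1))$ and converts the deviation scale accordingly.

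With this in hand the union bound finishes the proof. There are $n(n-1)/2<n^2$ normalized difference vectors $u_{ij}$, so the probability that the rescaled projection distorts \emph{some} pairwise distance by more than the factor $1\pm\e$ is at most $n^2\cdot 2\exp(-\Theta(\e^2 k))$. Choosing $k\ge C\e^{-2}\log n$ with the absolute constant $C$ large enough makes this bound strictly less than $1$ (indeed $o(1)$), so a good projection exists and is in fact found with high confidence by a random choice. This realizes rigorously the heuristic given just before the statement, with the finite family $Y=\{u_{ij}\}$ of combinatorial size $O(\log n)$ playing the role of the function class and concentration supplying the uniform estimate.

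The step I expect to be the main obstacle is the single-vector bound with the \emph{correct exponent} $\Theta(\e^2 k)$. One must pin the median of $\norm{z_{\le k}}$ to $\sqrt{k/n}\,(1+o(1))$ and then track the deviation scale carefully: the target event is a deviation of size $t=\e\sqrt{k/n}$ of $\norm{z_{\le k}}$ about this median, and feeding $t$ into the ambient bound $\exp(-\Theta(t^2 n))$ is exactly what produces the exponent $\Theta(\e^2 k)$, and hence the $\e^{-2}$ (rather than $\e^{-1}$) dependence of $k$. Verifying that the resulting constant $C$ is genuinely absolute, i.e.\ independent of both $n$ and $\e$ throughout the regime $\e\in(0,1/2)$, is the only part of the argument that is not purely formal.
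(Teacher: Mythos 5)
Your proof is correct, and it is essentially the concentration proof that the paper explicitly points to but does not carry out: the remark immediately after the statement refers the reader to Matou\v sek \cite{matousek:2002}, Section 15.2, and your argument --- the distributional identity $\norm{Pu}\sim\norm{z_{\leq k}}$ via exchanging the random subspace for a random point on $\s^{n-1}$, the $1$-Lipschitz property of $z\mapsto\norm{z_{\leq k}}$, L\'evy concentration at deviation scale $t=\e\sqrt{k/n}$ yielding $\exp(-\Theta(\e^2k))$, and a union bound over the $O(n^2)$ normalized differences --- is precisely that route. The paper itself only sketches a different, VC-theoretic heuristic: it treats the set $Y$ of normalized difference vectors as a finite function class of combinatorial dimension $O(\log n)$ and invokes uniform Glivenko--Cantelli estimation of the means $\E\abs{\langle y,z\rangle}$ over sampled directions, then passes to an orthogonal projection using near-orthogonality of random directions, deferring the rigorous ``meaning of the empirical mean'' to the references. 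Your direct union bound is the honest replacement for that appeal to VC machinery (overkill for a finite class), and your exponent bookkeeping is in fact a needed correction to the sketch: the paper's text quotes $\Theta(\e^{-1}\log n)$ directions, which is too optimistic for additive-$\e$ estimation of means --- the correct sample complexity, and the one matching the theorem's hypothesis $k\geq C\e^{-2}\log n$, is the $\e^{-2}$ you obtain. The one point you rightly flag as delicate is calibration of the renormalization: one should either normalize by the median $m$ of $\norm{Pu}$ itself, or note that $\E\norm{z_{\leq k}}^2=k/n$ exactly and concentration forces $m=\sqrt{k/n}\,\bigl(1+O(1/\sqrt{k})\bigr)$, so with $k\geq C\e^{-2}\log n$ and $C$ large the miscalibration is a negligible fraction of $\e$ and is absorbed into the constant; with that detail settled, the constant $C$ is absolute on the whole range $\e\in(0,1/2)$, and your argument is complete.
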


An even simpler proof using concentration can be found in \cite{matousek:2002}, Section 15.2, and an up-to-date survey of the lemma, in \cite{matousek:2008}.

The normalized projection is not quite as good as a genuine $1$-Lipschitz map, because the distortion of a distance can exceed one, and on rare occasions very considerably. Yet, as a reduction mapping for {\em approximate NN search,} the projection map is quite OK. And its
histogram is concentrated {\em no more}. This explains the efficiency of the random projection method for {\em approximate} NN search.
Combined with a suitable indexing scheme in a lower-dimensional space $\R^k$, or rather a collection of such schemes, the random projection method leads to an efficient indexing scheme for an $(1+\ve)$-approximate NN search (Indyk and Motwani \cite{IM}). 

The articles \cite{KOR:00} and \cite{IM} have appeared independently and at about the same time, and afterwards the dimensionality reduction methods have been shown \cite{AIP} to be near optimal in the cell probe model.

\section{Concluding remarks}
\subsection{Intrinsic dimensionality}
Merits of asymptotic analysis of indexing algorithms using artificial data\-sets sampled from theoretical high-dimensional distributions should be clear from \cite{navarro:2008}. At the same time, 
it is an often held belief that
the real data does not have very high intrinsic dimension. This corresponds to the existence of $1$-Lipschitz functions that are highly dissipating. Figure \ref{fig:density_best} shows the distance distribution to the points of the SISAP benchmark dataset of NASA images $X\subseteq\ell^2(20)$ of $40,149$ vectors in a $20$-dimensional Euclidean space \cite{BNC:03,sisap} from a highly dissipating pivot, selected from a gaussian cloud around $X$ with standard deviation on the order of the tolerance range $\ve=0.275$ retrieving on average $0.1\%$ 
of data. 
This has to be compared to Figure \ref{fig:hist_14gauss}.

\begin{figure}[ht]
\centering
\epsfig{file=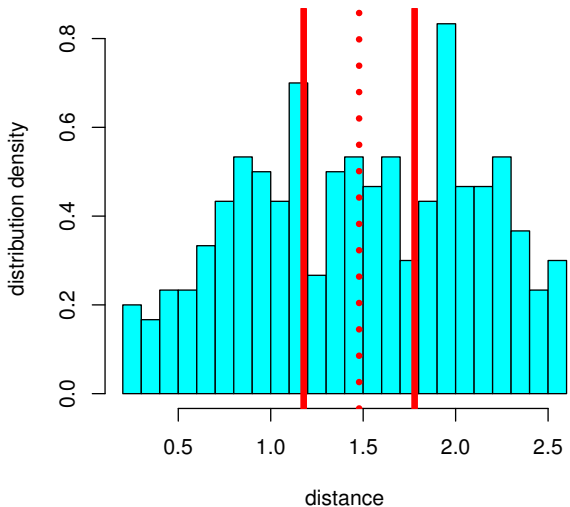, height=3.3in, width=3.3in}
\caption{Empirical density histogram of distances from a pivot having 
the highest found value of dissipation for the NASA dataset.  Vertical lines mark the mean $\pm$ tolerance range $\ve=0.275$. The $\epsilon$-dissipation ($0.747$) is the area outside of extreme lines.
\label{fig:density_best}}
\end{figure}


Of a great variety of approaches to intrinsic dimension \cite{clarkson:06}, at least two specifically measure the amount of concentration in data. The first one is the intrinsic dimension by Ch\'avez {\em et al.} \cite{chavez:01}
\begin{equation}
\dim_{dist}(X)=\frac{1}{2\var(d)}.
\label{eq:int}
\end{equation}
The second is the concentration dimension, studied within an axiomatic approach of \cite{pestov:07,pestov:08}:
\begin{equation}\label{eq:concdim}
\dim_{\alpha}(X)=\frac{1}{\left[2\int_0^1 \alpha_X(\e)~d\e\right]^2}.\end{equation}
(In both cases we assume that ${\mathrm{CharSize}}(X)=1$.)
The value (\ref{eq:concdim}) is convenient for asymptotic analysis in the spirit of this paper, but is nearly impossible to estimate for a given dataset.
On the other hand, (\ref{eq:int}) is readily calculated by sampling (e.g. $\dim_{dist}(X)=5.18$ for NASA images) and forms a good statistical estimator for the dimension of the hypothetical underlying measure $\mu$ in the most (only?) interesting case where metric balls have low VC dimension. The shortcoming of (\ref{eq:int}) is that the parameter estimates the concentration/dissipation behaviour of a {\em typical} pivot distance function, while it is a few most dissipating pivots that really matter for indexing.
 %
One may envisage the emergence of further concepts of intrinsic dimension in the same spirit, such as the {\em local dimension} of Ollivier \cite{ollivier:07}, Definition 3.

\subsection{Black box search model and Urysohn space}

The {\em black box model} of similarity search was studied by Krauthgamer and Lee \cite{KL:05}. Given a metric space (instance) $(X,d)$, a query is a one-point metric space extension $X\cup\{q\}$, where the distances $d(q,x)$, $x\in X$ are accessible via the distance oracle. Each $d(q,x)$ can be evaluated in constant (unit) time. A preprocessing phase is allowed, under the condition that an indexing scheme occupies ${\mathrm{poly}}\,(n)$ space. The efficiency of an algorithm for (exact or approximate) similarity search is estimated as a number of calls to the distance oracle necessary to answer a query.

This is a ``black box model'' in the sense that, formally speaking, there is no obvious domain (though we will see shortly that the domain is a well-defined separable metric case, and the setting is, in fact, classical). A remarkable feature of the model is that the problem of characterizing workloads admitting approximate NN queries in terms of an intrinsic dimension parameter receives a complete answer.

Recall that the {\em Assouad} (or {\em doubling}) {\em dimension} of a metric space $(X,d)$ is the minimum value $\rho\geq 0$ such that every set $A$ in $X$ can be covered by $2^{\rho}$ balls of half the diameter of $A$. (The {\em diameter}  of a set $A$ is the supremum of $d(x,y)$, $x,y\in A$.) Denote this parameter by $\dim_{dbl}(X)$.

\begin{theorem}[Krauthgamer and Lee \cite{KL:05}]
A metric space $(X,d)$ admits an algorithm requirying ${\mathrm{poly}}\,(n)$ space and taking ${\mathrm{polylog}}\,(n)$ time to answer a $(1+\e)$-approximate nearest neighbour query, where $\e<2/5$, if and only if \[\dim_{dbl}(X,d)=O(\log\log n).\]
\end{theorem}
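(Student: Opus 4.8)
The statement is a characterization, so I would prove the two implications separately, and the core difficulty is packaging the Assouad dimension into a navigation/search data structure that respects the black-box (distance-oracle) constraint. For the \emph{sufficiency} direction (low doubling dimension implies an efficient scheme), the plan is to build a hierarchical net structure on $X$ in the spirit of navigating nets or the Assouad net decomposition. First I would fix a geometric scale sequence $2^{-i}$ and, at each scale, choose an $2^{-i}$-net $N_i\subseteq X$, using the doubling property to bound the local branching: within any ball of radius $r$, the number of net points at the next finer scale is $2^{O(\dim_{dbl})}$. With $\dim_{dbl}(X)=O(\log\log n)$ this local branching factor is $2^{O(\log\log n)}=(\log n)^{O(1)}$, which is exactly what forces the tree to have $\mathrm{polylog}(n)$ depth-weighted cost while keeping the total net sizes at $\mathrm{poly}(n)$, meeting the space bound. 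The search is then a greedy descent refining the candidate nearest neighbour scale by scale, where at each step the doubling bound guarantees only $(\log n)^{O(1)}$ oracle calls, so the whole query costs $\mathrm{polylog}(n)$ oracle calls. The approximation factor $(1+\e)$ with $\e<2/5$ enters as a stopping/pruning rule: once the current candidate is within the right multiplicative window one halts, and the $2/5$ threshold is what keeps the pruned subtrees provably irrelevant so that no true approximate neighbour is missed.

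For the \emph{necessity} direction (an efficient scheme forces $\dim_{dbl}(X)=O(\log\log n)$), I would argue by contradiction using a packing/volume lower bound. If the doubling dimension were $\omega(\log\log n)$, then at some scale there is a ball $A$ that requires $2^{\omega(\log\log n)}=(\log n)^{\omega(1)}$ balls of half its diameter to cover it; equivalently, $A$ contains a large well-separated point configuration. The plan is to embed this configuration as a hard instance: place query points so that distinguishing their $(1+\e)$-approximate nearest neighbours among the packed centres requires the algorithm to probe essentially each separated cluster. Because the black-box model charges one unit per oracle call and the only information the algorithm receives is the vector of queried distances, an adversary argument (or a counting/communication bound) shows that $\mathrm{polylog}(n)$ distance evaluations cannot resolve $(\log n)^{\omega(1)}$ indistinguishable candidates, contradicting correctness within the $\e<2/5$ tolerance. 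The $\mathrm{poly}(n)$ space bound is used here too: one checks that no precomputed table of polynomial size can encode the answers for the super-polylogarithmically many distinct query outcomes the packing produces.

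The step I expect to be the main obstacle is the necessity direction, specifically turning a large doubling dimension at a single scale into a robust query lower bound in the \emph{oracle} model rather than the cell-probe model. The subtlety is that the algorithm may adaptively choose which distances to query based on earlier answers, so a naive adversary that fixes the hard instance in advance is not enough; one must maintain a set of instances consistent with all answers seen so far and show it stays large after every $\mathrm{polylog}(n)$ probes. Getting the quantitative match between the packing number $2^{\dim_{dbl}}$ and the information revealed per probe — and ensuring the bound degrades gracefully precisely at the $\e<2/5$ approximation threshold — is where the delicate constant-chasing lives. The sufficiency direction, by contrast, is essentially a clean application of navigating-net technology once the doubling assumption is in hand, with the only real care needed in verifying that the $(1+\e)$ pruning rule is sound for $\e<2/5$.
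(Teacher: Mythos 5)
First, a point of reference: the paper itself contains no proof of this statement --- it is quoted as background from Krauthgamer and Lee \cite{KL:05}, so your attempt has to be measured against the argument in that paper. At the level of architecture you have reconstructed it correctly (a navigating-net hierarchy for sufficiency; a packing-plus-adversary argument for necessity), but you have mislocated the $\e<2/5$ hypothesis, and the mislocation hides the real content. In the net-based data structure, $(1+\e)$-pruning during greedy descent is sound for \emph{every} $\e>0$; nothing in the upper bound breaks at $2/5$. The constant lives entirely in the necessity direction: it is the threshold below which an adversary can answer distance probes $d(q,x_i)$ so that the partial answers always extend to a genuine one-point metric extension of $X$ --- that is, remain a Kat\v etov function, $\abs{f(x)-f(y)}\le d(x,y)\le f(x)+f(y)$ --- while still reserving the option of making any yet-unprobed point of the packed set the unique $(1+\e)$-approximate nearest neighbour. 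The tension between these triangle-inequality constraints and the forced-identification requirement is exactly where $2/5$ emerges, and your sketch defers this ("delicate constant-chasing") even though it is the heart of the theorem: without it your adversary has no guarantee that the answers it maintains are realizable by \emph{any} query point, so the "large consistent set of instances" may be empty. Notably, the present paper demonstrates precisely this mechanism in its own black-box theorem for \emph{exact} search, where the function $f(x)=\max\{1-d(x,Y),\,d(x_0,Y)-d(x,x_0)\}$ is checked to be Kat\v etov and then realized as an actual query in the Urysohn space $\Ur$; that construction is the template your consistency-maintenance step needs.

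Two further gaps. In the sufficiency direction, nets at scales $2^{-i}$ give storage and probe counts proportional to the \emph{number of scales}, i.e.\ to the logarithm of the aspect ratio, which is not bounded by any function of $n$; getting genuinely $\mathrm{poly}(n)$ space and $\mathrm{polylog}(n)$ probes requires extra machinery (compressing runs of scales where nothing changes), which your "depth-weighted cost" phrase assumes away. In the necessity direction, your space-based counting argument is misdirected: the lower bound there is a pure oracle-probe (decision-tree) bound and holds no matter how much is precomputed about $X$, because the query's distance function is simply unavailable at preprocessing time; conversely, "many distinct query outcomes" is not by itself an obstruction to a $\mathrm{poly}(n)$-size table. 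Finally, the configuration a large doubling constant hands you is only \emph{near}-uniform --- an $r$-separated set inside a ball of radius comparable to $r$, so pairwise distances vary by a constant factor --- and you must verify that the adversary computation (with its $2/5$ budget) survives that spread, or first extract an almost-equilateral subset, before the indistinguishability argument goes through.
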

 
Here we will show that, on the contrary, an {\em exact} NN search in this context exhibits the curse of dimensionality even if the metric space $(X,d)$ is contained in the unit interval $[0,1]$ with the usual distance. 
With this purpose, we first convert the black box model into a conventional setting of searching in a metric domain.

The {\em universal Urysohn metric space,} $\Ur$, \cite{gromov:99,melleray} is a complete separable metric space uniquely defined by the {\em one-point extension property:} suppose $X$ is a finite subset of $\Ur$ and $q$ a one-point metric space extension of $X$. Then $\Ur$ contains a point $q^\prime$ so that the distances from $q$ and from $q^\prime$ to any point $x\in X$ are the same. 

\begin{figure}[ht]
\begin{center}
\scalebox{0.25}[0.25]{\includegraphics{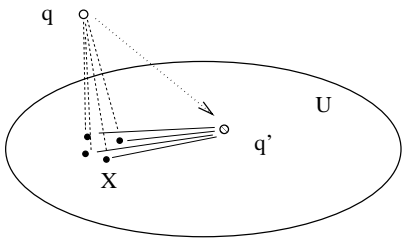}} 
\caption{\label{fig:onept_ext} One-point extension property.}
\end{center}
\end{figure}

An equivalent definition is that if $X\subseteq \Ur$ is finite and $f\colon X\to\R$ satisfes
\begin{equation}
\vert f(x)-f(y)\vert\leq d_X(x,y)\leq f(x) + f(y)
\label{eq:flood}
\end{equation}
for all $x,y\in X$, then there is $q\in\Ur$ with $f(x)=d(q,x)$ for all $x\in X$. (The functions satisfying (\ref{eq:flood}) are called {\em Kat\v etov functions}.)

This remarkable object has recently received plenty of attention in metric geometry. It is a {\em random,} or {\em generic,} metric space, in a sense that by equipping the integers with a randomly chosen metric $\rho$ and taking a completion, one obtains $\Ur$ almost surely \cite{vershik}. The space $\Ur$ contains an isometric copy of every separable metric space $\Omega$. For this reason, one can use $\Ur$ as a ``universal domain,'' and
the black-box model can be restated as a classical similarity search problem in the domain $\Omega=\Ur$.
 
\begin{theorem}
Let $X$ be a finite metric space. Denote $n=\abs X$. Then any deterministic algorithm for exact similarity search in $X$ within the black box model will take the worst case time $n$. 
\end{theorem}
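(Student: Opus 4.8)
The plan is to run an adversary argument, exploiting the one-point extension property of $\Ur$ to manufacture, on demand, two queries that the algorithm cannot yet tell apart but that have different nearest neighbours. Since $X$ is finite it embeds isometrically into $\Ur$, and by the characterisation via Kat\v etov functions a legal query is nothing but a function $f$ on $X$ with $\abs{f(x)-f(y)}\le d_X(x,y)\le f(x)+f(y)$, realised by some $q\in\Ur$ via $f(x)=d(q,x)$; conversely every such $f$ arises this way. An oracle call on $x$ returns $f(x)$, and a deterministic algorithm is just a decision tree whose branchings read off these values. As there are only $n$ points, at most $n$ calls are ever needed, so it suffices to show that every algorithm using at most $n-1$ calls can be defeated.

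First I would fix the adversary's answering rule: to every probe of a point $x$, respond with the constant value $M=\diam X$. This is legitimate, since $f\equiv M$ satisfies the Kat\v etov inequalities ($0\le d_X(x,y)$ and $d_X(x,y)\le\diam X=M\le 2M$) and is thus realised by a genuine query. Running the algorithm against these answers, suppose it halts after probing a set $S$ (which we may assume nonempty, the empty case being immediate for $n\ge 2$) with $\abs S\le n-1$, and announces a nearest neighbour. Choose any unprobed point $x_0$ and note that, relative to the answers already given, the admissible values of $f(x_0)$ form the interval $[L,U]$ with $L=\max_{x\in S}\abs{M-d_X(x_0,x)}=M-m_0$ and $U=\min_{x\in S}(M+d_X(x_0,x))=M+m_0$, where $m_0=\min_{x\in S}d_X(x_0,x)>0$.

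Next I would build two completions that agree with the transcript on $S$. For $f_1$, set $f_1(x_0)=L=M-m_0<M$ and extend to the remaining points one at a time by the one-point extension property, each time choosing a value strictly exceeding $M-m_0$ (possible, since the feasible interval of a new point $x'$ has upper endpoint at least $(M-m_0)+d_X(x_0,x')>M-m_0$); then $x_0$ is the unique nearest neighbour. For $f_2$, set $f_2(x_0)=U=M+m_0>M$ and extend by the same device; then any probed point, sitting at distance $M<M+m_0$, beats $x_0$, so the nearest neighbour is not $x_0$. Both $f_1$ and $f_2$ are Kat\v etov, agree with every answer the algorithm received, and are therefore realised by queries producing the very same computation path and the same output $p$. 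If $p=x_0$ the output is wrong for $f_2$, and if $p\ne x_0$ it is wrong for $f_1$; either way the algorithm errs. Hence no algorithm can be correct using at most $n-1$ calls, and the worst-case time is exactly $n$.

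The step I expect to be the main obstacle is verifying that the two completions are genuine global metrics and not merely consistent on $S\cup\{x_0\}$: lowering $f(x_0)$ can, through the Lipschitz constraint, drag down the admissible values at data points lying close to $x_0$, so one must check that enough room always remains to keep $x_0$ strictly smallest (for $f_1$) while honouring every triangle inequality. This is precisely the work discharged by the one-point extension property of $\Ur$, which guarantees that any partial assignment satisfying the Kat\v etov inequalities extends to a full one; and the constant ``far'' answer $M=\diam X$ is chosen exactly so that for every unprobed $x_0$ the feasible interval $[M-m_0,M+m_0]$ straddles the probed level $M$, keeping both the ``is'' and ``is not the nearest neighbour'' scenarios alive until the last point is examined.
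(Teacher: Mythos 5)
Your proof is correct and follows essentially the same route as the paper's: an adversary realized in the Urysohn space $\Ur$ that answers every probe with the diameter, followed by a Kat\v etov-function completion, consistent with the transcript, that decides the fate of an unprobed point $x_0$. Your two-sided completion is in fact a little more careful than the paper's argument. The paper exhibits only one consistent query, with unique nearest neighbour $x_0$, and asserts the algorithm ``clearly cannot halt''; this leaves the boundary case $k=n-1$, where the algorithm outputs the lone unprobed point, not explicitly addressed, whereas your case split ($p=x_0$ refuted by $f_2$, $p\neq x_0$ refuted by $f_1$) disposes of it directly. Likewise, your strict point-by-point extension guarantees uniqueness of the nearest neighbour for $f_1$, while the paper's explicit formula $f(x)=\max\{1-d(x,Y),\,d(x_0,Y)-d(x,x_0)\}$ can produce ties in degenerate configurations (e.g.\ an unprobed $x$ with $d(x,Y)=d(x,x_0)=d(x_0,Y)=1/2$). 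One small slip in your justification: the feasible upper endpoint for a new point $x'$ is a \emph{minimum} that includes the term $(M-m_0)+d_X(x_0,x')$, so it is at most, not at least, that quantity; the bound you actually need --- that the endpoint strictly exceeds $M-m_0$ --- still holds, because every term of the minimum does: $M+d(z,x')>M-m_0$ for $z\in S$ since $m_0>0$, the $x_0$-term exceeds $M-m_0$ since $d(x_0,x')>0$, and inductively $f(z)+d(z,x')>M-m_0$ for points already extended. With that one-line repair your argument is complete.
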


The result is true for simple information-theoretic reasons. We will produce for every $k<n$ a query $q^{\prime\prime}$ with a uniquely defined nearest neighbour in $X$ which cannot be answered in time $k$.

Without loss in generality, we can assume that $\diam(X)=1$. Let initially $q$ be a query having the property that $d(q,x)=1$ for all $x\in X$. 
Suppose that the algorithm has made $k<n$ calls to the distance oracle. Denote $x_1,x_2,\ldots,x_k\in X$ the points whose distance to $q$ has been accessed.  
Since $d(q,x_i)=1$ for all $i\leq k$, the algorithm clearly cannot halt at this stage.
Let $Q$ be the set of all $q^\prime\in \Ur$ with $1=d(q^\prime,x_i)$ for all $i=1,2,\ldots,k$. Since the algorithm is deterministic, we can replace $q$ with any $q^\prime\in Q$, and the sequence of executed calls to the oracle up until the step $k$ will be the same. 

Now denote $Y=\{x_1,x_2,\ldots,x_k\}$ and fix an $x_0\in X\setminus Y$. The function
\[f(x) = \max\{1-d(x,Y),d(x_0,Y)-d(x,x_0) \}\]
is Kat\v etov, and thus it is the distance function from some $q^{\prime\prime}$. Clearly, $q^{\prime\prime}\in Q$, and $q^{\prime\prime}$ admits a unique nearest neighbour in $X$, namely $x_0$.
Thus, the search cannot be concluded in $k$ steps even if it started with the well-defined query $q^{\prime\prime}$. \qed
 
If one requires the queries to follow the same underlying distribution as datapoints, the problem becomes more subtle, and we do not know the answer.

\subsection{Indexing via Delaunay graph}
Here is an example of an indexing scheme for exact similarity search which is still ``distance-based'' but of a rather different type from either pivots or metric trees.

The {\em Voronoi cell} $V(x)$ of a datapoint $x\in X$ in a metric domain $\Omega$ consists of all points $q\in\Omega$ having $x$ as the nearest neighbour. The {\em Delaunay graph} has $X$ as the set of vertices, with $x,y$ being adjacent if their Voronoi cells intersect. Suppose the domain has the property that every two points $x,y\in\Omega$ can be joined by a shortest geodesic path, not necessarily unique. (All the domains previously considered in this article are such, including even the Urysohn space.) Then for any $q\in \Omega$ and $x\in X$, either $x$ is the nearest neighbour to $q$, or else one of the datapoints $y$ Delaunay-adjacent to $x$ is strictly closer to $q$ than $x$ is. (Proof: start moving along a shortest geodesic from $x$ towards $q$, cf. Figure \ref{fig:delaunay8}, and use the triangle inequality.)

\begin{figure}[ht]
\centering
\scalebox{0.25}[0.25]{\includegraphics{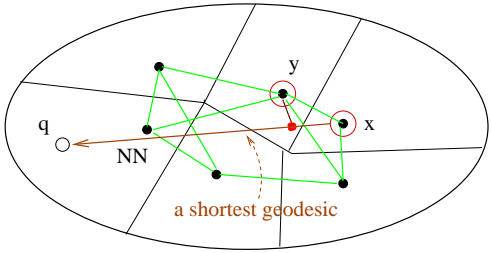}} 
\caption{NN search using Delaunay graph.}
\label{fig:delaunay8}
\end{figure}

This observation turns the Delaunay graph of $X$ in $\Omega$ into an indexing scheme for exact nearest neighbour search. Denote $S_x$ the list of points adjacent to each $x\in X$. Given a query $q$, start with an arbitrary $x_0\in X$, and find
\[x_1 =\arg\min_{y\in S_x}d(q,y).\]
If $x_1\neq x_0$, move to $x_1$ and repeat the procedure. Once $x_{i+1}=x_i$, the algorithm halts and returns $x_i$. 
This algorithm, already mentioned in \cite{clarkson:94}, was studied for general metric spaces by Navarro \cite{navarro:2002}. See also \cite{samet}, 4.1.6.

In order for the algorithm to be efficient, the average vertex degree of the Delaunay graph has to be small.
Navarro had observed ({\em loc. cit.,} Theorem 1) that this is not the case in general metric spaces. Specifically, he proved that for every two elements $a,b\in X$ there exists a finite metric space $Y=Y_{a,b}$ containing $X$ as a subspace in which $a,b\in X$ are connected in the Delaunay graph of $X$. 
The result by Navarro translates immediately into:

\begin{theorem}
Let $X$ be a finite metric subspace of the universal Urysohn space $\Ur$. Then every two elements $a,b\in X$ are adjacent in the Delaunay graph of $X$ in $\Ur$.
\end{theorem}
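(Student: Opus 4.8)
The plan is to exploit the universality of $\Ur$ expressed through its one-point extension property, which turns Navarro's purely combinatorial fact about finite metric spaces into a statement about the fixed embedding $X\subseteq\Ur$. First I would unwind the definition of Delaunay adjacency: the cells $V(a)$ and $V(b)$ meet exactly when there is a query point $q$ of the domain with $d(q,a)=d(q,b)=\min_{x\in X}d(q,x)$, so that both $a$ and $b$ are (tied) nearest neighbours of $q$. Thus it suffices to produce such a witness point inside $\Ur$.

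Second, I would invoke Navarro's theorem (Theorem 1 of \cite{navarro:2002}, quoted above): for the given pair $a,b$ there is a finite metric space $Y=Y_{a,b}\supseteq X$ in which $a,b$ are Delaunay-adjacent, i.e. the Voronoi cells of $a$ and $b$ computed against the point set $X$ inside the domain $Y$ intersect. Concretely this hands us a point $q\in Y$ with $d_Y(q,a)=d_Y(q,b)=\min_{x\in X}d_Y(q,x)$.

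Third---the crux---I would transport this witness into $\Ur$ without disturbing the isometric copy of $X$. Set $f(x)=d_Y(q,x)$ for $x\in X$. Since $q$ is a genuine point of the metric space $Y$ containing $X$, the function $f$ automatically satisfies $\abs{f(x)-f(y)}\leq d_X(x,y)\leq f(x)+f(y)$, that is, $f$ is a Kat\v etov function on $X$ in the sense of (\ref{eq:flood}). By the defining one-point extension property of $\Ur$, there is a point $q^\prime\in\Ur$ with $d(q^\prime,x)=f(x)$ for all $x\in X$. This $q^\prime$ inherits the nearest-neighbour pattern of $q$: one has $d(q^\prime,a)=f(a)=f(b)=d(q^\prime,b)$, and this common value equals $\min_{x\in X}d(q^\prime,x)$. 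Hence $q^\prime\in V(a)\cap V(b)$ with the Voronoi cells now taken inside the domain $\Ur$, so $V(a)\cap V(b)\neq\emptyset$ and $a,b$ are adjacent in the Delaunay graph of $X$ in $\Ur$.

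The only step requiring care---and the sole obstacle---is the transfer of the witness point, which a priori lives in the auxiliary space $Y$ rather than in $\Ur$, into $\Ur$ while holding $X$ fixed. This is precisely what the Kat\v etov/one-point extension characterization of $\Ur$ delivers: every distance profile of an actual point over a finite subset is a Kat\v etov function, and every Kat\v etov function on a finite $X\subseteq\Ur$ is realized by a bona fide point of $\Ur$. Everything else is an immediate verification of the triangle inequalities and of the minimality condition defining the Voronoi cells.
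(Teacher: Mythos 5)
Your proof is correct and is precisely the translation the paper has in mind: it cites Navarro's Theorem~1 to produce the auxiliary space $Y$ and the witness point, and then declares the passage to $\Ur$ ``immediate,'' the implicit step being exactly your realization of the distance profile $f(x)=d_Y(q,x)$ as a Kat\v etov function via the one-point extension property. You have simply made explicit the verification the paper leaves to the reader, including the correct reading of Voronoi-cell intersection as a tied nearest neighbour.
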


In fact, the same remains true in less exotic situations, as one can deduce from Proposition \ref{p:simplex} that if $\Omega$ be either $\R^n$, or the sphere $\s^n$, or the Hamming cube, then under the assumptions of Subs. \ref{ss:assumptions} the Delaunay graph of $X$ is, with high confidence, a complete graph on $n$ vertices. 

Thus, the indexing scheme in question still suffers from the curse of dimensionality because of concentration of measure considerations, but the argument seems to be of a different nature from that either for pivots or for trees. What would a common proof for all three types of schemes look like?
This highlights the difficulty of obtaining in a uniform way lower bounds for all possible ``distance-based'' indexing schemes (after they are formalized in a suitable way), not to mention an even more general setting of the cell probe model for all possible indexing schemes.

This having said, for real data the complexity of the Delaunay graph is lower than in an artificial asymptotic setting, and Voronoi diagrams are being successfully used for data mining algorithms in high dimensions, cf. \cite{TM:09}. 

In fact, it would be interesting to investigate the performance of the spatial approximation algorithm in hyperbolic metric spaces. Recall that a metric space $X$ in which every two points $x,y$ can be joined by a geodesic segment $[x,y]$ is {\em hyperbolic} (in the sense of Rips) \cite{vaisala} if there exists a $\delta>0$ so that every geodesic triangle is {\em $\delta$-thin:} each side $[x,y]$ is contained in the $\delta$-neighbourhood of the two other sides, $[x,z]$ and $[y,z]$ (Figure \ref{fig:geodesic}).
\begin{figure}[ht]
\begin{center}
\scalebox{0.25}[0.25]{\includegraphics{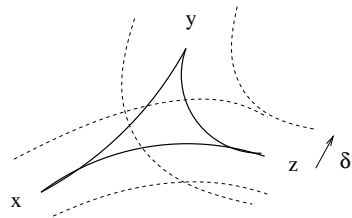}} 
\caption{\label{fig:geodesic} A $\delta$-thin geodesic triangle.}
\end{center}
\end{figure}

Alain Connes has conjectured in \cite{CC}, pp. 138--141 that a long-term human memory is organized as a hyperbolic simplicial complex, where a search is performed in a manner similar to the above.
\appendix

\section{Proof of the empty space paradox (Theorem \ref{th:empty} and Proposition \ref{p:simplex})}
Without loss in generality, normalize the observable diameter of $\Omega$ to one.
Let $\omega\in\Omega$. The distance function $\rho(\omega,-)$ is $1$-Lipschitz and so concentrates around its median value, $R(\omega)$. The resulting function $R\colon\Omega\to\R$, $\omega\mapsto R(\omega)$ is also $1$-Lipschitz, and concentrates around its median, $R_M$. It is easy to check that under our assumptions, the difference between the mean and the median of every $1$-Lipschitz function $f$ on $\Omega$ converges to zero as $O(\sqrt d)$ (uniformly in $f$). Thus, without a loss in generality, we can assume that, with high confidence, $R_M\to 1$ as $d\to\infty$. Notice that the above argument concerns the {\em domain} and not a particular {\em dataset}.

To prove Proposition \ref{p:simplex}, fix $\e>0$ and sample an instance of data, $X$. With confidence $1-n\exp(-O(d)\e^2)$, one has $\abs{R(x)-1}<\e/2$ for all $x\in X$. Moreover, since the datapoints are sampled in an i.i.d. fashion, by the union bound one has with confidence $1-n^2\exp(-O(d)\e^2)$ that $\abs{\rho(x,y)-R(x)}<\e/2$ for every pair $x,y\in X$. Since $n=\abs X$ is subexponential in $d$, the statement follows.

To prove Theorem \ref{th:empty}, again fix $\e>0$.
Denote $\ve_M$ the median value of the function $\ve_{NN}$. 
Suppose $\liminf_{d\to\infty}\e_M<1$. Proceed to a subsequence of domains and find $\gamma>0$ with $\ve_M\leq R_M-\gamma$ for all $d$. The probability that $R(\omega)$ deviates from $R_M$ by more than $\gamma/2$ is exponentially small in $d$. Since $n=\abs X$ only grows subexponentially in $d$, with confidence $1-\exp(-O(\e^2 d))$ one has for every $x\in X$:
\[R(x)- \ve_M \geq \frac\gamma 2.\]
Now we use a technical observation from \cite{GrM:83}: if $A\subseteq\Omega$ is such that $\mu(A)>\alpha_{\Omega}(\gamma)$ for some $\gamma>0$, then $\mu(A_{\gamma})>1/2$. It follows that
\[\mu\left(B_{\ve_M}(x) \right)\leq \alpha(\gamma/2) = \exp(-O(\e^2 d)),\]
and therefore
\[\mu\left(X_{\ve_M} \right)\leq n \exp(-O(\e^2 d)) = \exp(-O(\e^2 d)),\]
which contradicts the definition of $\ve_M$. This implies: $\liminf_{d\to\infty}\e_M\geq 1$.

To establish the converse inequality $\limsup_{d\to\infty}\e_M\leq 1$, recall that a ball of radius $R(\omega)$ centred at $\omega$ has measure $\geq 1/2$, and so we have an obvious estimate $\e_M\leq \min_{x\in X}R(x)$. The rest follows from concentration of the function $R$ around one. 

\section{\label{s:distortion}Distortion of Lipschitz embeddings $\ell^2(d)\hookrightarrow \ell^{\infty}(d+k)$}

\noindent
{\bf Lemma \ref{eq:c}} {\em
Fix $k$.
Let $c>0$ be a constant having the property that for every $d$ and each bounded subset $X$ of $\ell^2(d)$ there exists a 1-Lipschitz function $f\colon X\to\ell^\infty(d+k)$ having distortion $c$: for all $x,y\in X$,
\begin{eqnarray}
\label{eq:c}
\norm{f(x)-f(y)}_{\infty}&\leq&\norm{x-y}_2 \\
&\leq& c\norm{f(x)-f(y)}_{\infty}.\nonumber
\end{eqnarray}
Then $c=\Omega(\sqrt {d+k}/\sqrt{k})$, that is, $\Omega(\sqrt d)$ with a constant depending on $k$.}
\\[2mm]

The proof consists of a series of statements.

1. {\em There exists a 1-Lipschitz function $f\colon \ell^2(d)\to\ell^\infty(d+k)$ with the property (\ref{eq:c}).}
 
For every $n\in\N$, choose a function $f_n$ from the closed $n$-ball $B_n(0)$ in $\ell^2(d)$ to the $n$-ball in $\ell^\infty(d+k)$ with distortion $c$. The Banach space ultrapowers of both participating spaces formed with regard to a non-principal ultrafilter on the integers (see e.g. page 55 in \cite{JL2}) are isometric, respectively, to $\ell^2(d)$ and $\ell^\infty(d+k)$, because the spaces in question are finite-dimensional. The family of $1$-Lipschitz functions $(f_n)$ determines in a standard way a $1$-Lipschitz function, $f$, from $\ell^2(d)$ to $\ell^\infty(d+2)$, with the property (\ref{eq:c}) being preserved. 

2. {\em There exists a linear function $f$ with the property (\ref{eq:c}).}

Choose $f$ as in 1.
According to the Rademacher theorem (cf. a discussion and references on p. 42 in \cite{JL2}), $f$ is differentiable almost everywhere with regard to the Lebesgue measure. The differential of $f$ at any point, which we denote $T$, is a linear operator of norm one having property (\ref{eq:c}). In particular it is injective (though of course not onto), and the inverse has norm $\leq c$.

Recall that the {\em multiplicative Banach--Mazur distance} between two normed spaces $E$ and $F$ of the same dimension is the infimum of all numbers $\norm T\cdot \norm T^{-1}$, where $T$ ranges over all isomorphisms between $E$ and $F$. (See \cite{JL2}, p. 3, and \cite{GiM}, 7.2).
From the previous observation, we conclude:

3. {\em The Banach--Mazur distance between $\ell^2(d)$ and some $d$-dimensional subspace of $\ell^{\infty}(d+k)$ is $\leq c$.}

4. {\em The Banach--Mazur distance between $\ell^2(d+k)$ and $\ell^{\infty}(d+k)$ is $O(\sqrt k c)$.}

There is a projection $p$ from $\ell^{\infty}(d+k)$ having $T(\ell^2(d))$ as its kernel and such that $\norm p\leq \sqrt k$ and $\norm{1-p}\leq \sqrt k$ (combine \cite{davis}, Corollary on page 209, with a classical result of Kadec and Snobar on projection constants, cf. \cite{JL2}, p. 71). The Banach-Mazur distance between $\ell^2(k)$ and any other $k$-dimensional normed space, including the kernel of $p$, is $O(\sqrt k)$. Choose an isomorphism $S$ realizing this distance, then it is easy to verify that $T\oplus S$ realizes the distance $O(\sqrt k c)$ between $\ell^2(d+k)$ and $\ell^{\infty}(d+k)$.

Finally, the Banach--Mazur distance between $\ell^2(d+k)$ and $\ell^{\infty}(d+k)$ is $\sqrt{d+k}$ (cf. \cite{GiM}, p. 766).

\bibliographystyle{elsarticle-num}

\begin{thebibliography}{100}

\bibitem{AIP}
A. Andoni, P. Indyk, M. P\v atrascu, On the optimality of the dimensionality reduction method, in: Proc. 47th IEEE Symp. on Foundations of Computer Science, pp. 449--458, 2006.

\bibitem{anthbart:99}
M. Anthony and P. Bartlett,  
\newblock {\em Neural Network Learning: Theoretical Foundations.}
\newblock Cambridge University Press, Cambridge, 1999.

\bibitem
{barkol:00}
O. Barkol and Y. Rabani.
\newblock Tighter lower bounds for nearest neighbor search and related problems in the cell probe model.
\newblock
In: {\em Proc. 32nd ACM Symp. on the Theory of Computing,} 2000, pp. 388--396.

\bibitem{BGRS}
K. Beyer, J. Goldstein, R. Ramakrishnan, and U. Shaft, 
\newblock
When is ``nearest neighbor'' meaningful?,
\newblock in: {\em Proc. 7-th Intern.
Conf. on Database Theory
(ICDT-99),} Jerusalem, pp. 217--235, 1999.

\bibitem{borodin:99}
A. Borodin, R. Ostrovsky, and Y. Rabani, 
\newblock 
Lower bounds for high-dimensional nearest neighbor search and related problems,
\newblock in: {\em Proc. 31st Annual ACS Sympos. Theory Comput.},
pp. 312--321, 1999.

\bibitem{BNC:03}
B. Bustos, G. Navarro, and E. Ch\'avez, 
\newblock Pivot selection techniques for proximity searching in metric spaces.
\newblock {\em Pattern Recogn. Lett.}, 24:2357--2366, 2003.

\bibitem{CC}
J.-P. Changeux and A. Connes, {\em Conversations on Mind, Matter, and Mathematics,} 
Princeton University Press, Dec. 1998.

\bibitem{chavez:99}
E. Ch\'avez, J.L.  Marroqu\'\i n, and R.A. Baeza-Yates,  
\newblock Spaghettis: An Array Based Algorithm for Similarity Queries in Metric Spaces,
\newblock in: {\em Proceedings SPIRE/CRIWG,} pp. 38--46, 1999.

\bibitem{chavez:01}
E. Ch\'avez, G. Navarro, R. Baeza-Yates, and J.L. Marroqu\'\i n, 
\newblock
Searching in metric spaces,
\newblock {\em ACM Computing Surveys,} 33:273--321, 2001.

\bibitem
{CPZ:97}
P. Ciaccia, M. Patella and P. Zezula.
\newblock {M-tree}: An efficient access method for similarity search in metric
  spaces.
\newblock In {\em Proceedings of 23rd International Conference on Very Large
  Data Bases ({VLDB'97}), (Athens, Greece)}, 426--435, 1997.
  
\bibitem
{CPZ:98} 
P. Ciaccia, M. Patella and P. Zezula.
\newblock
A cost model for similarity queries in metric spaces, in:
\newblock
\emph{Proc. 17-th 
ACM Symposium on Principles of Database Systems}
(PODS'98), Seattle, WA, 59--68, 1998.

\bibitem{clarkson:94}
K.L. Clarkson.
\newblock
An algorithm for approximate closest-point queries.
In: {\em Proc. 10th symp. Comp. Geom.}
\newblock
Stony Brook, NY, 160--164, 1994.

\bibitem{clarkson:06}
K.L. Clarkson.
\newblock
Nearest-neighbor searching and metric space dimensions.
\newblock
In: Nearest-Neighbor Methods for Learning and Vision: Theory and Practice, MIT Press, 2006, pp. 15--59.

\bibitem{davis}
W.J. Davis, {\em Remarks on finite rank projections,} J. Approx. Theory \textbf{9} (1973), 205--211.

\bibitem{devroye:96}
L. Devroye, L. Gy\"orfi, and G. Lugosi.
\newblock A Probabilistic Theory of Pattern Recognition.
\newblock Springer-Verlag, New York, 1996.

\bibitem{farago:93}
A. Farag\'o, T. Linder, and G. Lugosi, 
\newblock Fast nearest neighbor search in dissimilarity spaces,
\newblock {\em IEEE Transactions on Pattern Analysis and Machine Intelligence,} 18:957--962, 1993.

\bibitem{FF:81}
P. Frankls and Z. F\"uredi.
\newblock
A short proof for a theorem of Harper about Hamming-spheres, {\em Discrete Math.} \newblock
34:311--313, 1981.

\bibitem{GiM}
A.A. Giannopoulos and V.D. Milman, {\em Euclidean structure in finite dimensional normed spaces,} in: Handbook of the geometry of Banach spaces, Vol. 1, pp. 707--779, North-Holland, Amsterdam, 2001. 

\bibitem{GJ}
P.W. Goldberg and M.R. Jerrum.
\newblock
Bounding the Vapnik--Chervonenkis dimension of concept classes parametrized by real numbers.
\newblock
{\em Machine Learning} 18:131-148, 1995.

\bibitem 
{gromov:99} 
M. Gromov.
\newblock
{\em Metric Structures for Riemannian and
Non-Riemannian Spaces.}
\newblock
Progress in Mathematics \textbf{152}. Birkhauser
Verlag, 1999.

\bibitem
{GrM:83} M. Gromov and V.D. Milman, 
\newblock
 A topological application of the isoperimetric inequality.
\newblock
Amer. J. Math. \textbf{105} (1983), 843--854.


\bibitem{HMN} P. Hall, J.S. Marron, and A. Neeman. 
\newblock
Geometric representation of high dimension, low sample size data.
\newblock
{\em J. R. Stat. Soc. Ser. B Stat. Methodol.}  67:427--444, 2005.

\bibitem{indyk:04}
P. Indyk.
\newblock
Nearest neighbours in high-dimensional spaces,
\newblock
In: J.E. Goodman, J. O'Rourke, Eds.,
{\em Handbook of Discrete and Computational Geometry}, Chapman and Hall/CRC, Boca Raton--London--New York--Washington, D.C.
877--892, 2004.

\bibitem{IM}
P. Indyk and R. Motwani.
\newblock
 Approximate nearest neighbours: towards removing the curse of dimensionality.
\newblock
Proc. 30th ACM Symp. Theory of Computing,  604--613, 1998.

\bibitem{JL}
W.B. Johnson and J. Lindenstrauss.
\newblock
Extensions of Lipschitz mappings into a Hilbert space.
\newblock
{\em Contemp. Math.} 26:189--206, 1984. 

\bibitem{JL2}
W.B. Johnson and J. Lindenstrauss.
\newblock
Basic concepts in the geometry of Banach spaces.
\newblock
In: Handbook of the Geometry of Banach Spaces, Vol. 1, pages 1--84,  North--Holland, Amsterdam, 2001.

\bibitem{klartag:07}
B. Klartag.
\newblock
A central limit theorem for convex sets.
\newblock
{\em Invent. Math.} 168:91--131, 2007.

\bibitem{KL:05}
R. Krauthgamer and J.R. Lee, {\em The black-box complexity of nearest-neighbor search,} Theoretical Computer Science 348(2):262 - 276, December 2005.

\bibitem{KOR:00} E. Kushilevitz, R. Ostrovsky, and Y. Rabani.
\newblock
Efficient Search for Approximate Nearest Neighbor in High Dimensional Spaces.
\newblock
{\em SIAM Journal on Computing} 30:457--474, 2000.

\bibitem{ledoux}
M. Ledoux.
\newblock
{\em The concentration of measure phenomenon}.
\newblock 
Math. Surveys and 
Monographs {\bf 89}, Amer. Math. Soc., 2001.

\bibitem{matousek:2002}
J. Matou\v sek.
\newblock
{\em Lectures on Discrete Geometry.}
\newblock
Springer, New York, 2002. 

\bibitem{matousek:2008}
J. Matou\v sek.
\newblock
On variants of the Johnson--Lindensrauss lemma.
\newblock
{\em Random Structures and Algorithms} 33:142--156, 2008.

\bibitem{melleray}
J. Melleray. \newblock 
On the geometry of Urysohn's universal metric space. \newblock 
{\em Topology and its Applications} 154:384-403, 2007.

\bibitem{M00} V. Milman.
\newblock
Topics in asymptotic geometric analysis.
\newblock In: {\em
Geometric and Functional Analysis,} special volume GAFA2000, pp.~792--815, 2000.

\bibitem{MS}                                                                 
V.D. Milman and G. Schechtman.
\newblock
\textit{Asymptotic theory of finite-dimensional 
normed spaces (with an Appendix by M. Gromov)}.
\newblock
Lecture Notes in Math., \textbf{1200}, Springer, 1986. 

\bibitem{miltersen}
P.B. Miltersen.
\newblock Cell probe complexity - a survey.
\newblock
In: {\em 19th Conf. on the Foundations of Software Technology and Theoretical Computer Science (FSTTCS),} 1999. Advances in Data Structures Workshop.

\bibitem{navarro:2002}
G. Navarro.
\newblock
Searching in metric spaces by spatial approximation.
\newblock
{\em
The VLDB Journal} 11:28--46, 2002.

\bibitem{navarro:2008}
G. Navarro.
\newblock
Analysing metric space indexes: what for?
\newblock
Invited paper, in: {\em Proc. 2nd Int. Workshop on Similarity Search and Applications (SISAP 2009)}, Prague, Czech Republic, 2009, 3--10.

\bibitem{ollivier:07}
Yann Ollivier.
\newblock
Ricci curvature of metric spaces.
\newblock
{\em C.R. Math. Acad. Sci. Paris} 345:643--646, 2007.

\bibitem{PC:08}
M. Patella and P. Ciaccia.
\newblock
The many facets of approximate similarity search.
\newblock
Invited paper, in: {\em Proc. First Int. Workshop on Similarity Search and Applications (SISAP 2008),} Cancun, M\'exico, pp. 10--21, 2008.

\bibitem{PT}
M. P\v atrascu and M. Thorup.
\newblock
Higher lower bounds for near-neighbor and further rich problems.
\newblock
in: {\em Proc. 47th IEEE Symp. on Foundations of Computer Science}, pp. 646--654, 2006.

\bibitem{pestov:00}
V. Pestov,  
\newblock On the geometry of similarity search: dimensionality curse and concentration of measure.
\newblock {\em Inform. Process. Lett.}, 73:47--51, 2000.

\bibitem{pestov:07}
V. Pestov.
\newblock Intrinsic dimension of a dataset: what properties does one expect?
\newblock in: {\em  Proc. of the 22-nd Int. Joint Conf. on Neural Networks (IJCNN'07), Orlando, FL.}, pp. 1775--1780, 2007.

\bibitem{pestov:08}
V. Pestov.
\newblock
An axiomatic approach to intrinsic dimension of a dataset. {\em Neural Networks} 21:204-213, 2008.

\bibitem{pestov:08b}
V. Pestov.
\newblock
{\em Lower bounds on performance of metric tree indexing schemes for exact similarity search in high dimensions.}
\newblock
Preprint, arXiv:0812.0146 [cs.DS], to appear in {\em Proc. SISAP'2011.}

\bibitem{PeSt06}
V. Pestov and A. Stojmirovi\'c.
\newblock Indexing schemes for similarity search: an illustrated paradigm,
\newblock {\em Fund. Inform.}, 70:367--385, 2006.

\bibitem{samet}
H. Samet.
\newblock
{\em Foundations of Multidimensional and Metric Data Structures.}
\newblock
Morgan Kaufmann Publishers Inc., San Francisco, CA, 2005.


\bibitem{shaft:06}
U. Shaft and R. Ramakrishnan, 
\newblock Theory of nearest neighbors indexability.
\newblock {\em ACM Trans. Database Syst. (TODS),}
vol. 31, pp. 814--838, 2006.


\newpage

\bibitem{sisap} SISAP metric space library,\\ {\tt http://sisap.org/Metric$\_$Space$\_$Library.html}

\bibitem{StPe07}
A. Stojmirovi\'c and V. Pestov,
\newblock
Indexing schemes for similarity search in datasets of short protein fragments,
\newblock {\em Information Systems}  32:1145-1165, 2007.

\bibitem{TM:09}
K. Ta\c sdemir and E. Mer\'enyi.
\newblock
Exploiting the data topology in visualizing and clustering of self-organizing maps.
\newblock
{\em IEEE Trans. Neural Networks} 20(4):549--562, 2009.

\bibitem{vaisala}
J. V{\"a}is{\"a}l{\"a}, {\em Gromov hyperbolic spaces}, Expositiones Mathematicae \textbf{23} (2005), 187--231.

\bibitem{vapnik:98}
V.N. Vapnik, 
\newblock {\em Statistical Learning Theory.}
\newblock John Wiley {\&} Sons, Inc., New York, 1998.

\bibitem{vempala}
S.S. Vempala.
\newblock
{\em The Random Projection Method.} 
\newblock 
DIMACS Series in Discrete Mathematics and Theoretical Computer Science, \textbf{65},
\newblock
Amer. Math. Soc., Providence, R.I., 2004.


\bibitem{vershik} A.M. Vershik, {\em 
Universality and randomness for the graphs and metric spaces,} in: Frontiers in number theory, physics, and geometry. I, pp. 245--266, Springer, Berlin, 2006. 

\bibitem{vidal:86}
E. Vidal,  
\newblock An algorithm for finding nearest neighbors in (approximately) constant average time,
\newblock {\em Pattern Recognition Letters} 4:145--157, 1986.

\bibitem{vidyasagar:2003}
M. Vidyasagar,  
\newblock {\em Learning and Generalization, With Applications to Neural Networks.} Second Ed.,
\newblock 
Springer-Verlag, London, 2003.

\bibitem{VolPest09} I. Volnyansky and V. Pestov.
\newblock
Curse of dimensionality in pivot-based indexes. 
\newblock In: {\em Proc. 2nd Int. Workshop on Similarity Search and Applications (SISAP 2009),} Prague, Czech Republic, 2009,
pp. 39-46. 

\bibitem{ZADB:06}
P. Zezula, G. Amato, V. Dohnal and M. Batko,  
\newblock {\em Similarity Search. The Metric Space Approach.}
\newblock Springer Science $+$ Business Media, New York, 2006.

\end{thebibliography}

\end{document}